\setlist{itemsep=0pt,topsep=2pt}
\newcommand{\N}{\mathbb{N}}
\newcommand{\C}{\mathbb{C}}
\newcommand{\ket}[1]{|#1\rangle}
\newcommand{\bra}[1]{\langle#1|}
\newcommand{\braket}[2]{\langle#1|#2\rangle}
\newcommand{\proj}[1]{|#1\rangle\langle#1|}
\newcommand{\ketbra}[2]{|#1\rangle\langle#2|}
\DeclarePairedDelimiter{\set}{\lbrace}{\rbrace}
\newcommand{\mc}[1]{\mathcal{#1}}
\newcommand{\ct}{^{\dagger}}
\newcommand{\id}{I}
\newcommand{\ie}{\emph{i.e.}}
\newcommand{\LOCC}{\operatorname{LOCC}}  
\newcommand{\LOCCN}{\ensuremath{\operatorname{LOCC}_{\N}}}  
\newcommand{\ALOCC}{\overline{\operatorname{LOCC}}}  
\newcommand{\bip}[2]{\C^{#1}\otimes\C^{#2}}
\def\a{\alpha}
\def\b{\beta}
\def\t{\theta}
\def\g{\gamma}
\def\<{\langle}
\def\>{\rangle}
\def\C{\mathbb{C}}
\def\be{\begin{equation}}
\def\ee{\end{equation}}
\def\bea{\begin{eqnarray}}
\def\eea{\end{eqnarray}}
\newcommand{\comment}[1]{}
\DeclareMathOperator{\tr}{Tr}
\DeclareMathOperator{\supp}{supp}
\DeclareMathOperator{\rank}{rank}
\DeclareMathOperator{\spec}{spec}
\DeclareMathOperator{\Pos}{Pos}
\DeclareMathOperator{\Herm}{Herm}
\newtheorem{theorem}{Theorem}
\newtheorem{lemma}[theorem]{Lemma}
\newtheorem{cor}[theorem]{Corollary}
\theoremstyle{definition}
\newtheorem{definition}{Definition}
\newcommand{\Thm}[1]{\hyperref[thm:#1]{Theorem~\ref*{thm:#1}}}
\newcommand{\Lem}[1]{\hyperref[lem:#1]{Lemma~\ref*{lem:#1}}}
\newcommand{\Cor}[1]{\hyperref[cor:#1]{Corollary~\ref*{cor:#1}}}
\newcommand{\Def}[1]{\hyperref[def:#1]{Definition~\ref*{def:#1}}}
\newcommand{\Obs}[1]{\hyperref[obs:#1]{Observation~\ref*{obs:#1}}}
\newcommand{\Sect}[1]{\hyperref[sec:#1]{Section~\ref*{sec:#1}}}
\newcommand{\Fig}[1]{\hyperref[fig:#1]{Figure~\ref*{fig:#1}}}
\newcommand{\Tab}[1]{\hyperref[tab:#1]{Table~\ref*{tab:#1}}}
\newcommand{\EqRef}[1]{\hyperref[eq:#1]{(\ref*{eq:#1})}}
\newcommand{\Eq}[1]{Equation~\hyperref[eq:#1]{(\ref*{eq:#1})}}
\begin{document}

\title{When $\ALOCC$ offers no advantage over finite LOCC}

\author{
    Honghao Fu$^{1}$, Debbie Leung$^{1}$, and Laura Man\v{c}inska$^{1,2}$ \\[1.5ex]
    $^{1}$ {\it Institute for Quantum Computing and Department of Combinatorics and Optimization,}\\
    {\it University of Waterloo, Waterloo, Ontario, Canada.} \\ $^{2}$ 
    {\it Centre for Quantum Technologies, National University of Singapore, Singapore.}  
}


\maketitle

\begin{abstract}
We consider bipartite LOCC, the class of operations implementable by
local quantum operations and classical communication between two
parties. Surprisingly, there are operations that cannot be implemented
with finitely many messages but can be approximated to arbitrary
precision with more and more messages. This significantly complicates
the analysis of what can or cannot be approximated with LOCC.  Towards
alleviating this problem, we exhibit two scenarios in which allowing
vanishing error does not help. The first scenario involves
implementation of measurements with projective product measurement
operators.  The second scenario is the discrimination of unextendible
product bases on two $3$-dimensional systems.

\end{abstract}

\section{Introduction}

We consider bipartite finite dimensional quantum systems, and what
state transformations can be achieved given arbitrary quantum
operations on each system and classical communication between them.
This class of quantum operations is known as LOCC.  LOCC arises in
many natural settings.  For example, it is much easier to transmit
classical data than quantum data over long distances.  For another
example, quantum gates involving multiple registers are much harder to
implement, and methods to effect them using entangled states,
measurements, and classical feedback hold high promise.  The study of
LOCC also provides insights on the nature of quantum
information, leading to discoveries including teleportation
\cite{Bennett-1993a}, quantum error correcting codes
\cite{Bennett-1996a}, and security proofs for quantum key exchange
\cite{Lo99b,Shor00}.

Unfortunately, LOCC, as an operationally defined class, does not have
a succinct mathematical description.  Traditionally, one turns to
relaxations of LOCC such as SEP or PPT instead of analysing LOCC
operations. This method has proved fruitful for many problems such as
data hiding \cite{DiVincenzo-2002a} and state discrimination
\cite{Watrous-2005}.  Yet this approach fails to answer other
interesting questions such as whether more rounds of communication, or
infinitely many intermediate measurement outcomes can make a
difference, or whether there are operations that can be approximated
arbitrarily closely with LOCC but do not belong to LOCC (\ie, whether
LOCC is equal to its topological closure $\ALOCC$ or
not). Recent investigation of the LOCC class itself has resolved these
questions; for example, more communication rounds can be helpful
\cite{Chitambar-2011a} and $\LOCC\neq\ALOCC$ 
\cite{Chitambar-2011a,Chitambar-2012a,Chitambar-2012b,Chitambar-2012}.  

A common technique to prove that a certain task cannot be accomplished
perfectly by a finite LOCC protocol is to start by assuming the
contrary. Then the properties of steps taken in any perfect
implementation of the task are shown to be incompatible with the
structure of an LOCC protocol. However, it could still be possible to
accomplish the task with vanishing error using LOCC
protocols. Excluding the possibility of approaching perfect
implementation is much harder and few results have been
established
\cite{IBM,Koashi07,KKB,Childs-2013a,Chitambar-2013a,Chitambar-2013b}.



In this paper, we focus on two problems concerning $\ALOCC$.  Our
first problem is, are there natural classes of measurements in LOCC
that are closed?  In other words, are there sets of measurements such
that it does not help to allow vanishing error and more and more
rounds of communication?  Our second problem concerns the possibility
of discriminating sets of orthogonal product states called UPBs (for
{\em unextendible product bases} though these states are not bases) in
$\ALOCC$.

We first summarize prior works in the two problems of interest.  
The first example of a task for which $\ALOCC$ gives no advantage over
finite LOCC concerns perfect discrimination of the so-called domino
states \cite{IBM}.  They cannot be discriminated by either set.  Over
a decade later a generalization was reached by establishing that the
set of full basis measurements implementable by LOCC is closed
\cite{KKB}. For our second question, \cite{Terhal-1999,DiVincenzo-2003a} 
established that UPBs cannot be distinguished with finite LOCC.  
Reference \cite{DeRinaldis-2004a} studied discrimination of UPBs in 
$\ALOCC$.  However, as pointed out
in \cite{KKB}, the proof in \cite{DeRinaldis-2004a} is incomplete 
and a particular claim in the proof contradicts other proven results.


We make partial progress in the two problems posed above. First, we
show that the set of LOCC implementable projective measurements with
tensor product operators is closed (see \Thm{Measurements}). Second,
we prove that $\ALOCC$ cannot be used to perfectly discriminate states
from a UPB in $\C^3\otimes\C^3$ (see \Thm{upb-no-go}). Both results
are applications of a necessary condition from \cite{KKB}.


The rest of the paper is organized as follows. In \Sect{Prelim} we set
up the notation and give basic definitions. In \Sect{Tensor} we
identify a closed class of projective measurements that can be
implemented with finite LOCC protocols and provide an application of
this result. In \Sect{UPB} we establish that two-qutrit UPBs cannot be
perfectly discriminated even with $\ALOCC$.  We conclude in
\Sect{Conclusion}.

\section{Preliminaries}
\label{sec:Prelim}

\subsection*{Finite and asymptotic LOCC}

We say that a measurement $\mathcal{M}$ can be implemented with \emph{finite LOCC}, and write $\mathcal{M}\in\LOCCN$, if $\mathcal{M}$ can be implemented exactly using a finite LOCC protocol (\ie, an LOCC protocol with finitely many communication rounds). We say that $\mathcal{M}$ can be implemented using \emph{asymptotic LOCC}, and write $\mathcal{M}\in\ALOCC$, if there exists a sequence of finite LOCC protocols $\mathcal{P}_1,\mathcal{P}_2,\dotsc$ that implement $\mathcal{M}$ with vanishing error. Note that asymptotic LOCC is the (topological) closure of finite LOCC. It represents the set of operations that can be implemented by LOCC protocols with arbitrary precision. For more detailed explanation of classes $\LOCCN$ and $\ALOCC$ see \cite{Chitambar-2012,M13}.

\subsection*{State discrimination problem}
Let $[n]:=\set{1,\dotsc,n}$, $\pi:[n] \to [0,1]$ be a probability distribution, and $S=\set{\rho_i : i\in[n]} \subseteq \bip{d_A}{d_B}$ be a set of bipartite states. In the state discrimination problem an index $i$ is chosen with probability $\pi(i)$ and Alice and Bob are given their respective registers of $\rho_i$. Their task is to find the index $i$ without error. The scenario where they are allowed to err is also of interest but will not be studied in this paper. In the error free case the probability distribution $\pi$ is not relevant and will therefore be chosen to be uniform. We say that the states from $S$ can be discriminated with finite LOCC (asymptotic LOCC), if there exists a measurement $\mathcal{M}\in\LOCCN$ ($\mathcal{M}\in\ALOCC$) that discriminates the states perfectly. 

There is a close connection between discrimination of mutually orthogonal states and implementation of projective measurements. In particular, the states from $S$ can be discriminated by finite LOCC (asymptotic LOCC) if and only if finite LOCC (asymptotic LOCC) can be used to implement the projective measurement onto the supports of the states $\rho_i$ \cite{Childs-2013a}.

\subsection*{Non-disturbing measurements}
We now introduce the concept of non-disturbing operators which is central for state discrimination with LOCC.
\begin{definition}
Let $S\subseteq\Pos(\C^d)$ be a set of orthogonal states. We say that $E \in \Pos(\C^d)$ is \emph{non-disturbing for $S$}, if
\begin{equation}
  \tr\big(E \rho E \sigma\big) = 0
\label{eq:Nondisturbing}
\end{equation}
for all distinct $\rho,\sigma\in S$. We say that a measurement $\mathcal{M}$ is \emph{non-disturbing for $S$} if each of its POVM elements of $\mathcal{M}$ is non-disturbing for $S$.
\label{def:Nondisturbing}
\end{definition}

Let $\supp(M)$ denote the support of $M$. Then Condition~(\ref{eq:Nondisturbing}) is equivalent to requiring that for all distinct $\rho,\sigma\in S$ and all $\ket{\psi} \in \supp(\rho)$ and $\ket{\phi}\in \supp(\sigma)$
\begin{equation}
  \bra{\psi} E \ket{\phi} = 0.
\end{equation}
Note that any measurement protocol transforms $S$ to a new set
conditioned on the culmulative measurement outcome.  In a perfect
discrimination protocol for $S$, at any point, the next measurement
applied to this conditioned set must not disturb it.  In particular,
the protocol must start with a measurement that is non-disturbing for
$S$. In an LOCC protocol each measurement must be local.  For finite
LOCC, each measurement has to be non-trivial. Hence, the states from a
set $S$ can be perfectly discriminated with finite LOCC only if $S$
admits a non-disturbing product operator $a\otimes b$ where exactly
one of the matrices $a,b$ is the identity matrix. If such an operator
does not exist, the states from $S$ cannot be discriminated with
finite LOCC.  Non-disturbing operators also provide a necessary
condition for state discrimination with asymptotic LOCC.

\begin{theorem}[\cite{KKB}]
Consider a set of states $S = \set{\rho_1,\dotsc,\rho_n}\subseteq\Pos(\bip{d_A}{d_B})$ such that $\bigcap_i \ker \rho_i$ does not contain any nonzero product vector. 
Then $S$ can be discriminated with asymptotic LOCC only if for all $\chi$ with $1/n\leq \chi \leq 1$ there exists a positive semidefinite product operator $E=a\otimes b$ satisfying all of the following:
\begin{enumerate}
\item $\sum_{\rho\in S}\tr(E\rho)=1$, 
\item $\max_{\rho\in S}\tr(E\rho)=\chi$,
\item $E$ is non-disturbing for $S$.
\end{enumerate}
\label{thm:KKB}
\end{theorem}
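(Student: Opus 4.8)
The plan is to unfold the definition of asymptotic discriminability into a sequence of finite LOCC protocols $\mathcal{P}_1,\mathcal{P}_2,\dotsc$ discriminating $S$ with errors $\epsilon_k\to 0$, and to read off the desired operators from individual branches of these protocols. The central observation is that along any fixed branch (a fixed sequence of measurement outcomes) the cumulative Kraus operator factorizes as $A\otimes B$, where $A$ is the ordered product of Alice's local Kraus operators and $B$ that of Bob's, because operators on different tensor factors commute. Hence the cumulative effect operator $E=(A\otimes B)\ct(A\otimes B)=(A\ct A)\otimes(B\ct B)=a\otimes b$ is automatically a positive semidefinite product operator, which takes care of the product structure demanded in the conclusion without my having to track who communicates when. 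Moreover, the non-disturbance condition \eqref{eq:Nondisturbing} has a direct operational meaning here: since $\tr\big(E\rho E\sigma\big)=\tr\big[(A\otimes B)\rho(A\otimes B)\ct\,(A\otimes B)\sigma(A\otimes B)\ct\big]$, requiring $E$ to be non-disturbing is exactly requiring the (subnormalized) post-measurement states of $\rho$ and $\sigma$ to stay orthogonal, so that the remainder of the protocol can still separate them.

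To target a prescribed $\chi$, I would track the normalized overlap profile $q_i(E):=\tr(E\rho_i)/\sum_j\tr(E\rho_j)$ and its maximum $\max_i q_i(E)$ as one walks down a branch. At the root $E=\id\otimes\id$ (which is non-disturbing because the states are orthogonal) and the profile is uniform, so $\max_i q_i=1/n$; at a leaf where the protocol commits to a single index the profile concentrates, so $\max_i q_i\to 1$ as $k\to\infty$. The subtlety is that one measurement outcome can jump $\max_i q_i$ across $\chi$, so I would first refine each $\mathcal{P}_k$ into many nearly-trivial steps (splitting every local POVM into a fine sequence of weak measurements) so that $\max_i q_i$ changes by at most $\delta_k\to 0$ per step. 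Following a branch whose terminal profile has $\max_i q_i>\chi$, I stop at the first node whose profile reaches $\chi$, obtaining $E_k=a_k\otimes b_k$ with $\max_i q_i(E_k)\in[\chi-\delta_k,\chi]$; rescaling $E_k$ so that $\sum_i\tr(E_k\rho_i)=1$ then enforces Condition~1 and, up to $\delta_k$, Condition~2.

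The core of the argument is showing that $E_k$ is approximately non-disturbing, and this is the step I expect to be the main obstacle. Quantitatively I would prove a stability estimate in the contrapositive direction: if at a node reached with non-negligible branch probability the cumulative operator satisfies $\tr(E\rho E\sigma)\geq\eta>0$ for some pair $\rho,\sigma\in S$, then the post-measurement states on that branch overlap by a controlled amount and no continuation can avoid an error bounded below by some $f(\eta)>0$, contradicting total error $\epsilon_k$ once $k$ is large. Turning this into a uniform bound that survives the limit, while correctly weighting by branch probabilities, is where the real work lies. Granting it, each $E_k$ satisfies $\tr(E_k\rho E_k\sigma)\to 0$ for every pair $\rho,\sigma$.

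Finally I would pass to the limit by compactness. Normalizing $\norm{a_k}=1$, the pair $(a_k,b_k)$ lies in a bounded region and a convergent subsequence yields $E^\star=a^\star\otimes b^\star$, still a positive semidefinite product operator, with $\tr(E^\star\rho E^\star\sigma)=0$ for all distinct $\rho,\sigma$ (Condition~3), $\max_i q_i(E^\star)=\chi$ (Condition~2), and $\sum_i\tr(E^\star\rho_i)=1$ (Condition~1). Here the hypothesis that $\bigcap_i\ker\rho_i$ contains no nonzero product vector is essential: it guarantees that no nonzero positive semidefinite product operator is supported entirely in the common kernel, hence $\sum_i\tr(E^\star\rho_i)>0$, so the normalization cannot degenerate in the limit and $E^\star$ is a genuine witness. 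The endpoint $\chi=1/n$ is witnessed directly by $E\propto\id$, and $\chi=1$ emerges from the concentrating leaves in the same limiting argument, completing the range $1/n\le\chi\le 1$.
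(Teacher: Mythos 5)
First, a point of reference: the paper does not actually prove \Thm{KKB} --- it is imported verbatim from \cite{KKB} --- so your proposal can only be measured against the proof given there. Your skeleton does match that proof's architecture: cumulative Kraus operators along an LOCC branch factor as $A\otimes B$, so the accumulated effect $E=(A\ct A)\otimes(B\ct B)$ is automatically a positive semidefinite product operator; the quantity $\max_i \tr(E\rho_i)/\sum_j\tr(E\rho_j)$ starts at $1/n$ at the root and concentrates near $1$ at successful leaves; an intermediate-value argument picks out a node where it equals $\chi$; an error-to-disturbance bound forces approximate non-disturbance there; and a compactness argument, using the hypothesis on $\bigcap_i\ker\rho_i$ to keep the normalization from degenerating, produces the limiting witness. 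Your identification of where the kernel hypothesis enters (a nonzero positive semidefinite product operator supported in the common kernel would yield a nonzero product vector there) is exactly right.

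However, the two load-bearing steps are asserted rather than proved, and both hide real difficulties. (i) The refinement of every local POVM ``into a fine sequence of weak measurements'' so that $\max_i q_i$ moves by at most $\delta_k$ per step is a genuine lemma, not a routine manipulation: a generic POVM $\{M_j\}$ is not obviously a composition of near-trivial instruments along every branch, and one must exhibit a refined protocol that is still a valid finite LOCC protocol implementing the same overall measurement, otherwise the error guarantee $\epsilon_k$ does not transfer to the refined tree. Without this, the intermediate-value step fails and you cannot hit a prescribed $\chi$. (ii) You explicitly ``grant'' the stability estimate converting $\tr(E\rho E\sigma)\geq\eta$ at an internal node into a lower bound on the protocol's total error. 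As you half-acknowledge, the obstruction is branch weighting: a node's disturbance contributes to the total error only in proportion to the probability of reaching that node, so your chosen first-crossing node may be reached with vanishing probability, in which case a large conditional disturbance there is perfectly consistent with $\epsilon_k\to 0$. One must show that the first-crossing nodes carry total probability bounded away from zero (this follows from the concentration of the profile at successful leaves) and then make a Markov-type selection of a crossing node with both the right $\chi$ and small conditional disturbance; this interacts with the rescaling to $\sum_i\tr(E_k\rho_i)=1$, whose factor is essentially the reciprocal of the branch probability. Until (i) and (ii) are supplied, what you have is an accurate road map of the proof in \cite{KKB} rather than a proof.
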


\Thm{KKB} implies that the set of LOCC implementable full basis measurements is closed. 
\begin{cor}[\cite{KKB}]
If a full orthogonal basis measurement can be implemented using asymptotic LOCC then it can already be implemented with finite LOCC.
\label{cor:KKB}
\end{cor}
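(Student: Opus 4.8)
The plan is to derive Corollary~\ref{cor:KKB} from Theorem~\ref{thm:KKB} by contradiction. Suppose a full orthogonal basis measurement $\mathcal{M} = \{\proj{v_1},\dotsc,\proj{v_N}\}$ onto an orthonormal basis $\{\ket{v_1},\dotsc,\ket{v_N}\}$ of $\bip{d_A}{d_B}$ can be implemented using asymptotic LOCC but not with finite LOCC. Since implementing $\mathcal{M}$ is equivalent to discriminating the orthogonal pure states $S = \set{\proj{v_1},\dotsc,\proj{v_N}}$, I would apply Theorem~\ref{thm:KKB} to this set. First I must check the hypothesis: because the $\ket{v_i}$ form a full basis, we have $\bigcap_i \ker\proj{v_i} = \{0\}$, so the intersection trivially contains no nonzero product vector and Theorem~\ref{thm:KKB} applies.

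**Next I would** extract the consequences of the non-disturbance condition for a full \emph{basis}. For a set of one-dimensional projectors spanning the whole space, the non-disturbing condition $\bra{v_i} E \ket{v_j} = 0$ for all $i\neq j$ forces $E$ to be diagonal in the basis $\{\ket{v_i}\}$; that is, $E = \sum_i \lambda_i \proj{v_i}$ with $\lambda_i \geq 0$. Consequently $\tr(E\proj{v_i}) = \lambda_i$, and condition~1 of Theorem~\ref{thm:KKB} becomes $\sum_i \lambda_i = 1$ while condition~2 says $\max_i \lambda_i = \chi$. The key point to exploit is that $E$ is additionally required to be a product operator $a\otimes b$.

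**The crux of the argument** is to choose $\chi$ aggressively and derive a contradiction with the product structure. Taking $\chi \to 1$ (say $\chi = 1$, which is allowed since $1/N \le \chi \le 1$) forces one eigenvalue, say $\lambda_1 = 1$, and all others $\lambda_i = 0$; hence $E = \proj{v_1}$ must itself be a product operator $a\otimes b$. A rank-one product operator $a\otimes b$ is proportional to a projector onto a product vector, so this would force $\ket{v_1}$ to be a product state. Running the same argument for each $i$ (by permuting which index carries the weight) shows that \emph{every} basis vector $\ket{v_i}$ must be a product state. Thus the asymptotic-LOCC implementability of $\mathcal{M}$ forces the basis $\{\ket{v_i}\}$ to be a product basis.

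**The main obstacle,** and the step I expect to require the most care, is closing the loop from ``the basis is a product basis'' back to ``$\mathcal{M}\in\LOCCN$.'' Once we know $\{\ket{v_i}\}$ is a full \emph{orthogonal product basis} of $\bip{d_A}{d_B}$, I would invoke the structural fact that any complete orthogonal product basis can be discriminated by a finite LOCC protocol: the parties can locally distinguish the $A$-components and $B$-components by measuring in appropriate local bases, or more carefully, one can use the non-disturbing product operators guaranteed at intermediate values of $\chi$ to peel off measurement steps one at a time. The delicate part is ensuring that the sequence of non-disturbing product measurements extracted from Theorem~\ref{thm:KKB} (across the range of $\chi$) can be assembled into a \emph{terminating} finite protocol rather than merely an asymptotic one; I would argue that each such product operator $a\otimes b$ yields a genuine nontrivial local measurement that strictly reduces the problem, and since the dimension is finite the recursion halts, placing $\mathcal{M}$ in $\LOCCN$ and contradicting our assumption.
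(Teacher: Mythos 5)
Note first that the paper does not prove \Cor{KKB} itself --- it is quoted from \cite{KKB} --- so the relevant internal comparison is with the inductive proof of \Thm{States}, which follows the same strategy as \cite{KKB}. Measured against that, your proposal has two genuine gaps. First, \Thm{KKB} with $\chi=1$ only guarantees the existence of \emph{one} product operator $E$ whose maximum $\max_i\tr(E\proj{v_i})=1$ is attained at \emph{some} index; you have no control over which index, so ``permuting which index carries the weight'' is not licensed by the theorem, and you have not shown that every $\ket{v_i}$ is a product vector. (That particular conclusion can be rescued by a different argument, e.g.\ $\ALOCC\subseteq\mathrm{SEP}$ forces a rank-one separable POVM element to be a product projector, but this does not save the proof, because of the second gap.)

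The second gap is fatal: your closing step rests on the claim that any complete orthogonal product basis can be discriminated by finite LOCC, and this is false. The domino basis discussed in \Sect{Tensor} of this very paper is a full orthonormal \emph{product} basis of $\C^3\otimes\C^3$ that cannot be discriminated even by asymptotic LOCC \cite{IBM}. So ``the basis is a product basis'' cannot be the bridge back to $\LOCCN$; the implication you need simply does not hold. The workable route is the one you mention only in passing: choose $\chi\in(1/n,1/(n-1))$, so that the resulting $E=a\otimes b$ is non-disturbing, is \emph{not} proportional to the identity, and satisfies $\tr(E\proj{v_i})>0$ for every $i$; deduce that the spectral measurement of $a$ (or of $b$) is a nontrivial local measurement that does not disturb the basis; perform it and recurse on the strictly smaller post-measurement problem by induction on $d_A+d_B$. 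That is precisely the skeleton of the paper's proof of \Thm{States}, and it is the argument that actually terminates in a finite protocol. As written, your main line of attack cannot be repaired without replacing its final step by this recursion.
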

In the next section we generalize \Cor{KKB} for a larger class of projective measurements.

\section{Projective measurements with tensor product operators}
\label{sec:Tensor}
In light of the findings of \cite{KKB} presented in the previous section
it is natural to ask whether a similar result holds for the class of all projective measurements, or even for the class of all POVM measurements. To this end, in this section we show that the class of all projective measurements with tensor product operators that can be implemented with finite LOCC is indeed closed.

\subsection{Results}

\begin{theorem}
Let $\mathcal{M}=\set{P_i^A\otimes P_i^B}_{i\in[n]}\subseteq\Pos(\bip{d_A}{d_B})$ be a projective measurement. Then $\mathcal{M}\in\overline{\LOCC}$ implies that $\mathcal{M}\in\LOCC_{\N}$.
\label{thm:Measurements}
\end{theorem}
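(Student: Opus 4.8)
The plan is to pass to the state-discrimination picture, harvest non-disturbing product operators from \Thm{KKB}, and then convert one of them into a nontrivial \emph{local} measurement that can open a finite protocol, recursing on strictly smaller systems. Since $\mathcal{M}$ is projective, implementing it is equivalent to discriminating the states $\rho_i$ supported on $A_i\otimes B_i$, where $A_i:=\im P_i^A$ and $B_i:=\im P_i^B$. Because the $P_i^A\otimes P_i^B$ are orthogonal and sum to the identity, their ranges span $\bip{d_A}{d_B}$, so $\bigcap_i\ker(P_i^A\otimes P_i^B)=\{0\}$ and the hypothesis of \Thm{KKB} holds vacuously. Hence $\mathcal{M}\in\overline{\LOCC}$ supplies, for every $\chi\in[1/n,1]$, a non-disturbing product operator $E_\chi=a_\chi\otimes b_\chi$ obeying conditions 1--3. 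Spelling out non-disturbance for a product operator $a\otimes b$ against the product supports gives the factorised condition that, for every $i\neq j$, $P_i^A a P_j^A=0$ or $P_i^B b P_j^B=0$. Moreover, since $P_iP_j=0$ for $i\neq j$, completeness forces every pair to be $A$-orthogonal ($P_i^A P_j^A=0$) or $B$-orthogonal ($P_i^B P_j^B=0$).

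I would run the induction on $d_A+d_B$, with one-sided measurements ($d_A=1$, $d_B=1$, or $n=1$) as trivial base cases. The structural gift of the product-projective setting is that any projector $Q^A$ on Alice's space that \emph{commutes} with every $P_i^A$ is automatically non-disturbing: for $i\neq j$ one has $P_i^A Q^A P_j^A=Q^A P_i^A P_j^A$, which vanishes on the $A$-orthogonal pairs, while the $B$-orthogonal pairs are killed by $P_i^B P_j^B=0$. As $\id-Q^A$ commutes too, Alice may measure the nontrivial binary projective measurement $\{Q^A,\id-Q^A\}$ as the opening move of a finite protocol. Conditioned on each outcome, the task becomes the product projective measurement $\{Q^A P_i^A\otimes P_i^B\}$ on $Q^A\C^{d_A}\otimes\C^{d_B}$ (respectively its complement), with Alice's dimension strictly reduced; each block still lies in $\overline{\LOCC}$, since passing to it only pre- and post-composes the approximating protocols with local operations. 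The whole problem therefore reduces to producing, at each stage, a nontrivial projector commuting with all $P_i^A$ or with all $P_i^B$.

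The heart of the argument, and the step I expect to be the main obstacle, is this Key Lemma: if $\mathcal{M}\in\overline{\LOCC}$ and $n\geq2$, then at least one of $\{P_i^A\}_i$, $\{P_i^B\}_i$ has a nontrivial commutant, i.e.\ a common nontrivial reducing subspace. I would prove it by contraposition: assuming both families act irreducibly (generate the full matrix algebras), I would show that every non-disturbing product operator $a\otimes b$ must be scalar. Granting this, conditions 1 and 2 can only be met with $\chi=1/n$, so no admissible operator exists for any $\chi\in(1/n,1]$, contradicting \Thm{KKB} and forcing $\mathcal{M}\notin\overline{\LOCC}$. To pin $a\otimes b$ down to a scalar I would combine three ingredients: the factorised non-disturbance condition; the covering condition from completeness; and irreducibility through the double commutant. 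The concentrated operators produced as $\chi\to1$ are especially useful, since there $\tr(aP_i^A)=0$ or $\tr(bP_i^B)=0$ for all but one index, forcing $A_i\subseteq\ker a$ or $B_i\subseteq\ker b$ and thereby exposing candidate reducing subspaces.

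The delicate point, where the product-and-projective structure must be used in full, is exactly the passage from \emph{non-disturbing product operators} (all that $\overline{\LOCC}$ guarantees through \Thm{KKB}) to a \emph{non-disturbing local operator} compatible with the measurement, that is, converting the combinatorial pattern of vanishing blocks into a genuine algebraic reducing subspace for one party. Non-disturbance by itself yields vanishing off-diagonal blocks but not commutation, so the real work is to show that, absent any reducing subspace on either side, these block conditions together with the orthogonality covering are rigid enough to admit only scalar solutions. I would also need the two bookkeeping facts invoked above, that the conditional sub-measurements are genuine product projective measurements and that they remain in $\overline{\LOCC}$; both should be routine once $Q^A$ commutes with the $P_i^A$.
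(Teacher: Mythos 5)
Your skeleton---pass to state discrimination, invoke \Thm{KKB}, extract a nontrivial local projector commuting with all of Alice's (or all of Bob's) projectors, and recurse on $d_A+d_B$---matches the paper's proof of the equivalent \Thm{States}, and your observation that a projector commuting with every $P_i^A$ is automatically non-disturbing is correct. But the step you yourself flag as the main obstacle, your Key Lemma, is precisely the content of the paper's \Lem{Nondist}, and the route you sketch for it does not work. The intermediate claim ``if both families act irreducibly then every non-disturbing product operator is scalar'' is false: for the domino basis both $\set{P_i^A}_i$ and $\set{P_i^B}_i$ are irreducible, yet $E=\proj{1}\otimes\proj{1}$ is non-disturbing (every other domino state is orthogonal to $\ket{1,1}$) and is certainly not scalar. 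The operators you single out as ``especially useful,'' those with $\tr(aP_i^A)=0$ or $\tr(bP_i^B)=0$ for most $i$, are exactly of this uncontrollable kind, so the contraposition as you state it cannot be pushed through.

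What closes the gap in the paper is a restriction plus a tensor-factorization argument, neither of which appears in your proposal. First, one takes $\chi\in(1/n,1/(n-1))$ rather than an arbitrary $\chi\in(1/n,1]$: conditions 1 and 2 of \Thm{KKB} then force both $E\not\propto\id$ and $\tr(E\rho_i)>0$, hence $(a\otimes b)P_i\neq 0$, for \emph{every} $i$. Second, for such an $E$, non-disturbance for a complete orthogonal set makes each product support $\mathcal{H}_A^{(i)}\otimes\mathcal{H}_B^{(i)}$ invariant under $a\otimes b$, so that $(a\otimes b)(Q_A\otimes Q_B)=(Q_A\otimes Q_B)X$ with $X=(Q_A\ct aQ_A)\otimes(Q_B\ct bQ_B)$; because $X$ is itself a tensor product and the left-hand side is nonzero, this factorizes into $aQ_A=Q_AX_A$ and $bQ_B=Q_BX_B$, so each $\mathcal{H}_A^{(i)}$ is $a$-invariant and, by hermiticity of $a$, $a$ commutes with every $P_i^A$. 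That yields the nontrivial commuting local operator your induction needs. Without this lemma (or an equivalent substitute), your argument is incomplete at its central step, even though the surrounding reduction and recursion are sound.
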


We say that a set of states $S= \set{\rho_i}_i\subseteq\Pos(\bip{d_A}{d_B})$ is a \emph{full orthogonal set} if the states $\rho_i$ are mutually orthogonal, and $\sum_{i} \rho_i$ has full rank. We now rephrase \Thm{Measurements} in terms of state discrimination. 

\begin{theorem}
Let $S= \set{\rho_i := \tau_i \otimes \sigma_i}_{i\in[n]}\subseteq\Pos(\bip{d_A}{d_B})$ be a full orthogonal set. If the states from $S$ can be discriminated with asymptotic LOCC then they can be discriminated with finite LOCC.
\label{thm:States}
\end{theorem}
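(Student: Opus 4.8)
The plan is to prove \Thm{States} by induction on the number of states $n$, using \Thm{KKB} to guarantee that whenever $S$ is discriminable with asymptotic LOCC one of the two parties can perform a nontrivial non-disturbing local measurement that splits $S$ into strictly smaller orthogonal product ensembles. First I would verify the hypothesis of \Thm{KKB}: since $S$ is a full orthogonal set, $\sum_i \rho_i$ has full rank, so $\bigcap_i \ker \rho_i = \set{0}$ and in particular contains no nonzero product vector. Thus asymptotic discriminability of $S$ supplies, for every $\chi \in [1/n,1]$, a positive semidefinite non-disturbing product operator $E = a \otimes b$ satisfying the three conditions of \Thm{KKB}; the goal is to convert this analytic information into a concrete local measurement step.

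The combinatorial backbone is the \emph{A-pair}/\emph{B-pair} dichotomy: because each $\rho_i = \tau_i \otimes \sigma_i$ is a product state, orthogonality $\tr(\rho_i\rho_j)=\tr(\tau_i\tau_j)\tr(\sigma_i\sigma_j)=0$ forces, for every $i\neq j$, either $\supp \tau_i \perp \supp \tau_j$ (an A-pair) or $\supp \sigma_i \perp \supp \sigma_j$ (a B-pair). I would then record the observation driving the induction. Let $\Gamma_B$ be the graph on $[n]$ with an edge $\set{i,j}$ whenever $\supp \sigma_i \not\perp \supp \sigma_j$, and let $C_1,\dots,C_m$ be its connected components. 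Distinct components have mutually orthogonal Bob-supports, so (using full rank to see $\sum_i \supp\sigma_i = \C^{d_B}$) Bob's projective measurement $\set{\Pi_p}_p$ onto the spans $V_p = \spn \bigcup_{i\in C_p}\supp\sigma_i$ is a well-defined PVM; a short case check, invoking that every pair is an A-pair or a B-pair, shows each $I \otimes \Pi_p$ is non-disturbing, and that outcome $p$ leaves exactly the sub-ensemble $\set{\rho_i : i\in C_p}$ intact on $\bip{d_A}{}\otimes V_p$. Hence if $\Gamma_B$, or symmetrically its Alice-analogue $\Gamma_A$, is disconnected ($m\geq 2$), then $S$ admits a genuine non-disturbing local splitting into ensembles of fewer than $n$ states, each still discriminable with asymptotic LOCC (being a sub-ensemble of one that is, so the $S$-protocol already identifies its members), and the inductive step closes.

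The main obstacle is the remaining \emph{irreducible} case, where both $\Gamma_A$ and $\Gamma_B$ are connected and no component measurement is nontrivial --- precisely the domino-like configuration in which finite LOCC gets stuck. Here I would argue the contrapositive: if both graphs are connected then $S$ cannot be discriminated with asymptotic LOCC, so under the hypothesis of \Thm{States} this case never occurs. Concretely, I expect connectivity of the two graphs to over-constrain every non-disturbing product operator $a\otimes b$: the non-disturbing condition evaluated on the overlapping supports forced by connectivity should pin $a$ and $b$ down to scalars, or at least force the weights $\tr(a\tau_i)\tr(b\sigma_i)$ to be independent of $i$. Either conclusion makes it impossible to realize $\max_i \tr(E\rho_i)=\chi$ for $\chi$ near $1$ while keeping $\sum_i \tr(E\rho_i)=1$, contradicting \Thm{KKB}. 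Turning this ``connectivity forces triviality of non-disturbing operators'' heuristic into a rigorous statement --- and doing so using only the positivity of $a,b$, rather than a naive spectral decomposition, whose eigenprojectors need not inherit the separation property of $a$ --- is the step I expect to be genuinely hard and to require the full strength of the product structure.

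Finally I would attend to the induction's bookkeeping: checking that each branch $\set{\rho_i : i\in C_p}$ is again a full orthogonal set on its reduced local spaces, so that both \Thm{KKB} and the inductive hypothesis apply to it, and then assembling the per-branch finite protocols, prefixed by the splitting measurement, into a single finite LOCC protocol. Via the state-discrimination correspondence this proves $\mathcal{M}\in\ALOCC \Rightarrow \mathcal{M}\in\LOCCN$, that is, \Thm{Measurements}.
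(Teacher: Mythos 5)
Your reduction to the case where one of the graphs $\Gamma_A,\Gamma_B$ is disconnected is fine as far as it goes (the component projectors do form a non-disturbing local PVM, and the bookkeeping for the $n$-induction works), but the entire difficulty of the theorem is concentrated in the case you leave open: both graphs connected. There you offer only the conjecture that connectivity over-constrains every non-disturbing product operator so as to violate the conditions of \Thm{KKB}, and you yourself flag this as ``the step I expect to be genuinely hard.'' That is the gap, and it is not a technicality --- the disconnected case is precisely the case where a finite LOCC first step is combinatorially obvious, so the unproved claim carries all the content. Moreover, the claim cannot be proved in the strong form you suggest (``connectivity pins $a$ and $b$ down to scalars''): when the local factors $\tau_i$ have rank greater than one, their supports can all be invariant under a common nontrivial projector even though $\Gamma_A$ is connected (take any set of the form $\tau_i = \id_2\otimes\tau_i'$ on an enlarged Alice space), so nontrivial non-disturbing product operators do exist in the connected case. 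At best one could hope that connectivity forces all the weights $\tr(E\rho_i)$ to coincide, and establishing that would require an analysis of comparable depth to the one you are trying to avoid.

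The paper sidesteps this entirely by inducting on $d_A+d_B$ rather than on $n$. From \Thm{KKB} with $\chi\in(1/n,1/(n-1))$ it extracts a non-disturbing $E=a\otimes b$ with $a\not\propto\id$ (say) and $E\rho_i\neq 0$ for all $i$; \Lem{Nondist} upgrades this to $a\otimes\id$ being non-disturbing, and \Lem{LinAlg} shows the spectral projectors $a_\lambda$ of $a$ give a nontrivial non-disturbing local PVM. The crucial point your $n$-induction cannot accommodate is that this measurement need not eliminate a single state: every $\rho_i$ may survive every outcome $\lambda$, with only Alice's dimension shrinking to $\rank(a_\lambda)<d_A$. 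This is exactly the ``irreducible'' configuration your splitting step cannot make progress on, and it is why the paper measures progress in dimension rather than in the number of states. If you want to salvage your outline, you would need to prove your connectivity claim --- and the natural way to do so is essentially to rediscover \Lem{Nondist} and the spectral argument, at which point the graph-component machinery becomes unnecessary.
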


We first prove two lemmas that help establish \Thm{States}.  The main
ingredient for proving \Thm{States} is the construction of an operator
$m$ such that either $m \otimes \id$ or $\id \otimes m$ is
non-disturbing for $S$. A matrix $M$ is non-disturbing for a complete
orthogonal set $S= \set{\rho_i}_i$ if and only if the row spaces
$\mathcal{H}_i$ of $\rho_i$ are $M$-invariant for each $i$. Here, a
subspace $\mathcal{H}$ is said to be $M$-invariant, if $M \ket{\psi}
\in \mathcal{H}$ for all $\ket{\psi}\in\mathcal{H}$. 

%


Our first lemma provides useful characterization of $M$-invariance.


\begin{lemma}
Let $M\in\Herm(\C^d)$, $\mathcal{H}$ be an $h$-dimensional 
subspace of $C^d$, 
$\set{\ket{v_i}}_{i\in[h]}$ be a fixed orthonormal basis of
$\mathcal{H}$, $\ket{i}\in\C^h$, 
and $Q:=\sum_{i\in[h]} \ket{v_i} \bra{i}$.
Then the following are equivalent:
\begin{enumerate}
\item $\mathcal{H}$ is $M$-invariant;
\item $MQ=QX$ for an $h\times h$ matrix $X$;
\item $\mathcal{H}$ has an orthonormal basis consisting of eigenvectors of $M$.
\end{enumerate}
\label{lem:LinAlg}
\end{lemma}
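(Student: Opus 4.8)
The plan is to prove the three equivalences in a cycle: (1) $\Rightarrow$ (2) $\Rightarrow$ (3) $\Rightarrow$ (1). The matrix $Q$ is an isometry from $\C^h$ onto $\mathcal{H}$, meaning $Q^\dagger Q = \id_h$ and $QQ^\dagger$ is the orthogonal projector onto $\mathcal{H}$. This isometry is the bridge between the intrinsic description of $M$-invariance (statement 1), the coordinate description via the compression of $M$ to $\mathcal{H}$ (statement 2), and the spectral description (statement 3). Let me outline each implication.

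For (1) $\Rightarrow$ (2): if $\mathcal{H}$ is $M$-invariant, then for each basis vector $\ket{v_i}$ we have $M\ket{v_i}\in\mathcal{H}$, so $M\ket{v_i}=\sum_j X_{ji}\ket{v_j}$ for some scalars $X_{ji}$. Collecting these columns shows $MQ=QX$ where $X=(X_{ji})$. The hard part here is essentially just bookkeeping, so I would state it concisely. For (2) $\Rightarrow$ (3), the crucial observation is that $X$ inherits Hermiticity from $M$: since $Q^\dagger Q=\id_h$, multiplying $MQ=QX$ on the left by $Q^\dagger$ gives $X=Q^\dagger M Q$, which is manifestly Hermitian because $M$ is. Therefore $X$ is unitarily diagonalizable, $X=U D U^\dagger$ with $U$ unitary and $D$ real diagonal. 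The columns of $QU$ then form an orthonormal set in $\mathcal{H}$ (orthonormality follows from $(QU)^\dagger(QU)=U^\dagger Q^\dagger Q U=\id_h$), and I would verify that each such column is an eigenvector of $M$: from $MQ=QX=QUDU^\dagger$ we get $MQU=QUD$, so the $j$th column of $QU$ is an eigenvector of $M$ with eigenvalue $D_{jj}$. Since these $h$ orthonormal eigenvectors all lie in the $h$-dimensional space $\mathcal{H}$, they form an orthonormal eigenbasis of $\mathcal{H}$.

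For (3) $\Rightarrow$ (1): if $\mathcal{H}$ has an orthonormal basis $\set{\ket{w_i}}$ of eigenvectors of $M$, then any $\ket{\psi}\in\mathcal{H}$ expands as $\ket{\psi}=\sum_i c_i\ket{w_i}$, and $M\ket{\psi}=\sum_i c_i\lambda_i\ket{w_i}$ remains in $\mathcal{H}$, giving $M$-invariance. This direction is immediate.

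The \textbf{main obstacle} I anticipate is the implication (2) $\Rightarrow$ (3), and specifically recognizing that one must use the Hermiticity of $M$ to force $X$ to be Hermitian and hence diagonalizable by a \emph{unitary}. Without this, $X$ could be an arbitrary matrix and statement (2) would only yield an arbitrary (non-orthonormal) basis of $\mathcal{H}$ that is $M$-invariant, not an orthonormal \emph{eigen}basis. The identity $X=Q^\dagger M Q$ is what closes this gap, and flagging that $Q$ is an isometry is the conceptual key that makes the whole lemma transparent. Everything else reduces to routine linear algebra.
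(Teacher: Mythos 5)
Your proposal is correct and follows essentially the same route as the paper: the cycle $(1)\Rightarrow(2)\Rightarrow(3)\Rightarrow(1)$, with the key step being $X=Q^\dagger MQ$ (hence $X$ Hermitian) and then pushing an orthonormal eigenbasis of $X$ through the isometry $Q$ to get one for $M$ inside $\mathcal{H}$. Your $X=UDU^\dagger$ phrasing is just the paper's spectral decomposition $X=\sum_i\lambda_i\proj{w_i}$ written in matrix form.
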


\begin{proof}
Note that the operator $Q$ maps $\C^d$ to the subspace $\mathcal{H}$
and the action of $X$ on $C^d$ (in the basis $\ket{v_i}$)
represents the action of $M$ on $\mathcal{H}$ (in the basis
$\ket{v_i}$).

We first prove $(1)\Rightarrow(2)$. If $\mathcal{H}$ is $M$-invariant, then for all $i\in[h]$
\begin{equation}
  M \ket{v_i} = \sum_{j\in[h]} x_{ji} \ket{v_j}
\end{equation}
for some $x_{ji}\in\C$. If we let $X:=(x_{ij})$, then
\begin{equation}
  MQ = \sum_{i,j\in[h]} x_{ji} \ketbra{v_j}{i}
     = \sum_{j\in[h]} \Bigg( \ket{v_j}  \sum_{i\in[h]} \bra{i} x_{ji} \Bigg)
     = \sum_{j\in[h]} \ketbra{v_j}{j} X = QX.
\end{equation}

Next, let us show that $(2) \Rightarrow (3)$. If $MQ = QX$, then $X=Q\ct M Q$ as $Q\ct Q = \id_h$. Since $M$ is Hermitian, $X$ is also Hermitian and it has a spectral decomposition $X = \sum_{i \in [h]} \lambda_i \proj{w_i}$. Then for all $i\in[h]$
\begin{equation}
  M Q \ket{w_i} = Q X \ket{w_i} = \lambda_i Q \ket{w_i}.
\end{equation}
Therefore, the vectors $Q\ket{w_i}\in\mathcal{H}$ are eigenvectors of $M$. Finally, for all $i,j\in[h]$ we have $\bra{w_i}Q\ct Q\ket{w_j} = \braket{w_i}{w_j} = \delta_{ij}$. So the set
\begin{equation}
  \set{Q\ket{w_i}}_{i\in[h]}
\end{equation}
is an orthonormal basis of $\mathcal{H}$ consisting of eigenvectors of $M$.

Last, we prove that $(3)\Rightarrow (1)$. Let $\set{\ket{u_i}}_{i\in[h]} \subseteq \mathcal{H}$ be a set of orthogonal eigenvectors of $M$ with corresponding eigenvalues $\mu_i$. Then any vector $u\in\mathcal{H}$ can be expressed as $\ket{u} = \sum_{i\in[h]} c_i\ket{u_i}$ for some $c_i\in\C$. Now we have 
\begin{equation}
  M \ket{u} = \sum_{i\in[h]} \mu_i c_i\ket{u_i} \in \mathcal{H}
\end{equation}
as desired.
\end{proof}

We now show that whenever $a\otimes b \in \Pos(\bip{d_A}{d_B})$ is non-disturbing for a full orthogonal set of product states, so are $a \otimes \id$ and $\id \otimes b$.

\begin{lemma}
Let $a\otimes b\in \Pos(\bip{d_A}{d_B})$ and $P=P_A\otimes P_B\in \Pos(\bip{d_A}{d_B})$ be a projector onto a subspace $\mathcal{H}=\mathcal{H}_A\otimes\mathcal{H}_B$.
If $(a\otimes b) P\neq 0$ and $\mathcal{H}$ is $(a\otimes b)$-invariant, then $\mathcal{H}$ is also $(a\otimes \id)$ and $(\id\otimes b)$-invariant.
\label{lem:Nondist}
\end{lemma}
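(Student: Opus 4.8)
The plan is to reduce both invariance conclusions to a single matrix identity and then exploit the uniqueness of tensor factorizations together with the idempotency of $P_A$ and $P_B$. Recall that a subspace with projector $P$ is $M$-invariant precisely when $(\id-P)MP=0$, i.e.\ $MP=PMP$. Applying this to $M=a\otimes b$ and $P=P_A\otimes P_B$, and using $(a\otimes b)(P_A\otimes P_B)=(aP_A)\otimes(bP_B)$, the hypothesis that $\mathcal{H}$ is $(a\otimes b)$-invariant becomes
\begin{equation}
  (aP_A)\otimes(bP_B) = (P_A a P_A)\otimes(P_B b P_B).
\end{equation}
Setting $X:=aP_A$ and $Y:=bP_B$, the condition $(a\otimes b)P\neq 0$ says exactly that $X\otimes Y\neq 0$, hence $X\neq 0$ and $Y\neq 0$, and the displayed identity reads $X\otimes Y=(P_A X)\otimes(P_B Y)$.

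Next I would invoke the uniqueness of tensor factorization: if $X\otimes Y=X'\otimes Y'$ with $X,Y\neq 0$, then $X'=cX$ and $Y'=c^{-1}Y$ for some nonzero scalar $c$. (One sees this by comparing the blocks $X_{ij}Y=X'_{ij}Y'$ and picking an index with $X_{ij}\neq 0$, which forces $Y'$ to be a nonzero multiple of $Y$ and then $X'$ the reciprocal multiple of $X$.) Applied here, this gives $P_A X=cX$ and $P_B Y=c^{-1}Y$ for a single $c\neq 0$.

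The crux is then to show $c=1$, which is where idempotency enters. From $P_A X=cX$ and $P_A^2=P_A$ we get $cX=P_A X=P_A^2 X=cP_A X=c^2 X$; since $X\neq 0$ this forces $c^2=c$, so $c\in\{0,1\}$, and as $c\neq 0$ we conclude $c=1$. Therefore $P_A X=X$ and $P_B Y=Y$, that is, $P_A a P_A=aP_A$ and $P_B b P_B=bP_B$. Reading these back through the invariance criterion $(\id-P)MP=0$ for $M=a\otimes\id$ and $M=\id\otimes b$ (using $P_B\neq 0$ and $P_A\neq 0$ respectively to cancel the spectator factor) yields that $\mathcal{H}$ is both $(a\otimes\id)$- and $(\id\otimes b)$-invariant.

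The only genuinely delicate point is the role of the nondegeneracy hypothesis $(a\otimes b)P\neq 0$. Without it one could have, say, $bP_B=0$ while $aP_A\neq 0$, making the $(a\otimes b)$-invariance condition hold vacuously even though $\mathcal{H}_A$ fails to be $a$-invariant. The hypothesis is precisely what guarantees $X,Y\neq 0$, so that the factorization-uniqueness and scalar-pinning argument applies; everything else is routine linear algebra.
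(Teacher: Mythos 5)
Your proof is correct and follows essentially the same route as the paper's: both reduce the $(a\otimes b)$-invariance of $\mathcal{H}$ to a tensor-product identity and then use the hypothesis $(a\otimes b)P\neq 0$ to invoke uniqueness of tensor factorization and split the identity into its $A$ and $B$ legs. The only (cosmetic) difference is that you work directly with the projector criterion $(\id-P)MP=0$ and pin the scalar ambiguity to $c=1$ via idempotency of $P_A$, whereas the paper phrases invariance through the isometries $Q_A,Q_B$ of its \Lem{LinAlg} and simply absorbs the scalar into the definition of $X_A$ and $X_B$ — your version makes that glossed-over step fully explicit.
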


\begin{proof}
Let $h_A,h_B$ be the dimensions of $\mathcal{H}_A$ and $\mathcal{H}_B$,  respectively. Fix some orthonormal basis $\set{\ket{\alpha_i}}_{i\in[h_A]}$ of $\mathcal{H}_A$ and let $Q_A = \sum_{i\in[h_A]} \ket{\alpha_i} \bra{i}$, where $\ket{i}\in\C^{h_A}$. Define $Q_B$ similarly. Then $P_A = Q_A Q_A\ct$ and $P_B = Q_B Q_B\ct$. If $\mathcal{H}$ is $(a\otimes b)$-invariant, then 
\begin{equation}
  (a\otimes b)(Q_A\otimes Q_B) = (Q_A\otimes Q_B)X
\label{eq:Invariant}
\end{equation}
for an $(h_A h_B)\times (h_A h_B)$ matrix $X$. Note that $X$ is a tensor product, since $X = (Q_A\ct a Q_A)\otimes (Q_B\ct b Q_B)$. Since $(a\otimes b) P\neq0$ we also have that $(a\otimes b)(Q_A\otimes Q_B)\neq 0$. Hence, Equation~(\ref{eq:Invariant}) together with the fact that $X$ is a tensor product implies that
\begin{equation}
  a Q_A = Q_A X_A \text{ and } b Q_B = Q_B X_B 
\label{eq:Invariant2}
\end{equation}
for some $X_A$ and $X_B$ such that $X=X_A\otimes X_B$. By \Lem{LinAlg}, \Eq{Invariant2} implies that $\mathcal{H}_A$ is $a$-invariant and  $\mathcal{H}_B$ is $b$-invariant. Since any subspace is invariant under the identity operation, the lemma follows.
\end{proof}

We are ready to prove \Thm{States} using \Lem{LinAlg} and 
\Lem{Nondist}. 

\begin{proof}[Proof of Theorem~\ref{thm:States}]
We prove by induction on $d_A+d_B$. Clearly the states in $S$ can be discriminated with both finite and asymptotic LOCC if $d_A+d_B \leq 3$. We assume that the theorem statement holds for all values $d_A+d_B < m$ for some \mbox{$m\in \N$}. 

Suppose $d_A+d_B = m$ and the $n$ states in $S$ can be discriminated with asymptotic LOCC. Then for every $1/n\leq \chi \leq 1$ there exists a product operator $E = a \otimes b\in\Pos(\bip{d_A}{d_B})$ satisfying the three conditions in \Thm{KKB}. Our goal is to choose appropriate value of $\chi$ and use \Lem{Nondist} to conclude that both $a\otimes \id$ and $\id \otimes b$ are non-disturbing for $S$.

Pick any $\chi\in(\frac{1}{n},\frac{1}{n-1})$ and let $a\otimes b$ be the corresponding operator. Let us now check that $a\otimes b$ is nontrivial and satisfies the hypothesis of \Lem{Nondist}. The range of $\chi$ is chosen so that Conditions (1) and (2) together imply that
\begin{itemize}
\item $a\otimes b$ cannot be proportional to the identity matrix (from now on we assume that $a$ is not proportional to the identity matrix as the other case is similar);
\item for all $\rho\in S$ we have $E\rho\neq 0$.
\end{itemize}
For each $i\in[n]$, let $\mathcal{H}^{(i)} = \mathcal{H}^{(i)}_A \otimes \mathcal{H}^{(i)}_B$ be the column space of $\rho_i$ and $P_i$ be the projector onto $\mathcal{H}^{(i)}$. Then the last item implies that $(a\otimes b) P_i\neq 0$. Since $a\otimes b$ is non-disturbing for $S$, the subspace $\mathcal{H}^{(i)}$ is $(a\otimes b)$-invariant. Due to \Lem{Nondist}, $\mathcal{H}^{(i)}$ is $(a\otimes I)$-invariant.

Let $a_\lambda$ be the projector onto the $\lambda$-eigenspace of
$a$. Due to the equivalence of (1) and (3) in \Lem{LinAlg}, 
the subspaces $\mathcal{H}^{(i)}$ are $(a_\lambda\otimes
\id)$-invariant for all $\lambda\in\spec(a)$. So if $\mc{\id}_B$ is
the identity measurement on Bob, the nontrivial local projective
measurement
\begin{equation}
  \set{a_\lambda : \lambda\in\spec(a)} \otimes \mc{\id}_B 
  =:\mathcal{M}_A \otimes \mc{\id}_B
\end{equation}
is non-disturbing for $S$.

Suppose we measure the states in $S$ using $\mathcal{M}_A \otimes \mc{\id}_B$ and obtain outcome $\lambda\in\spec(a)$. If we restrict the unnormalized post-measurement states to the column space of $a_\lambda \otimes \id$, we have the following set:
\begin{equation}
  S_\lambda := \set{ (Q_\lambda \otimes \id)\ct \rho_i (Q_\lambda 
                         \otimes \id)}_{i\in[n]} = 
   \set[\Big]{\big(Q_\lambda\ct \tau_i Q_\lambda\big)
   \otimes \sigma_i}_{i\in[n]}
   \subseteq{\bip{\rank(a_\lambda)}{d_B}}.
\end{equation}
Here,
\begin{equation}
  Q_\lambda := \sum_{i\in[\rank(a_\lambda)]} \ketbra{\lambda_i}{i},
\end{equation}
$\ket{i}\in\C^{\rank(a_\lambda)}$, and $\set{\ket{\lambda_i}}_i \subseteq \C^{d_A}$ is some orthonormal basis of the $\lambda$-eigenspace of $a$. We now want to use the induction hypothesis to conclude that the states in $S_\lambda$ can be discriminated with finite LOCC. To do so, we have to check that $S_\lambda$ is a set of mutually orthogonal states that can be discriminated with asymptotic LOCC and that $\sum_{\rho\in S_\lambda} \rho$ has full rank.

First, since $\sum_{i\in[n]} \rho_i$ is positive semidefinite and has full rank and $Q_\lambda$ has full column rank, the matrix
\begin{equation}
  \sum_{\rho \in S_\lambda} \rho = 
   (Q_\lambda \otimes \id) \ct
   \Bigg( \sum_{i\in[n]} \rho_i \Bigg) (Q_\lambda \otimes \id)
\end{equation}
has full rank. Suppose that a sequence $\mathcal{R}_1,\mathcal{R}_2,\dotsc$ of finite LOCC protocols can be used to certify that the states in $S$ can be discriminated with asymptotic LOCC. Let $\mathcal{R}'_i$ be the finite LOCC protocol in which Alice first embeds her input space $\C^{\rank(a_\lambda)}$ in $\C^{d_A}$ by applying the isometry $Q_\lambda$ and then the two parties proceed with the protocol $\mathcal{R}_i$. After the embedding, Alice and Bob have the states $(a_\lambda \otimes \id) \rho_i (a_\lambda \otimes \id)$ up to a normalization. Since the column space, $\mathcal{H}^{(i)}$, of $\rho_i$ is $(a_\lambda \otimes \id)$-invariant, the column space of $(a_\lambda \otimes \id)$ is contained in $\mathcal{H}^{(i)}$.
Therefore, the sequence $\mathcal{R}'_1,\mathcal{R}'_2,\dotsc$ can be used to certify the asymptotic distinguishability of the states from $S_\lambda$. 

\begin{figure}[!ht]
\centering
\includegraphics[bb=0 -1 228 98,width=4in]{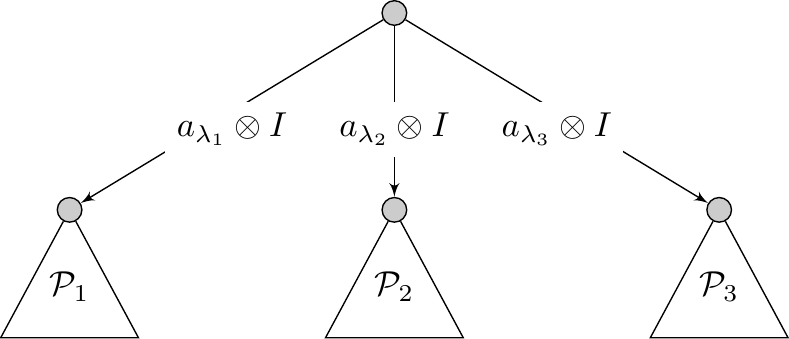}
\caption{An example, where $a$ has three distinct eigenvalues $\lambda_1, \lambda_2$ and $\lambda_3$. We first perform the measurement $\mathcal{M}_A\otimes \mc{\id}_B$ and then, conditioned on the outcome $\lambda_i$, proceed with the protocol $\mathcal{P}_i$ that discriminates the states from $S_{\lambda_i}$.}
\label{fig:ProductStates}
\end{figure}

\comment{
\begin{figure}[!ht]
\centering
\includegraphics[bb=0 -1 228 98]{Product-States.pdf}
\caption[An example of a measurement whose implementation can be initiated by a nontrivial measurement on Alice.]{An example, where $a$ has three distinct eigenvalues $\lambda_1, \lambda_2$ and $\lambda_3$. We first perform the measurement $\mathcal{M}_A\otimes \mc{\id}_B$ and then, conditioned on the outcome $\lambda_i$, proceed with the protocol $\mathcal{P}_i$ that discriminates the states from $S_{\lambda_i}$.}
\label{fig:ProductStates}
\end{figure}
}

Since $\mathcal{M}_A \otimes \mc{\id}_B$ is non-disturbing for $S$, the states in $S_\lambda$ are mutually orthogonal. Finally, as $\rank(a_\lambda)+d_B < d_A+ d_B$, the states from $S_\lambda$ can be discriminated by a finite LOCC protocol $\mathcal{P}$ by induction hypothesis. Combining the measurement $\mathcal{M}_A \otimes \mc{\id}_B$  with the finite LOCC protocol $\mathcal{P}$ gives a finite LOCC protocol for discriminating the states in~$S$ (See \Fig{ProductStates}).
\end{proof}

We can lift the tensor product requirement for one of the states in \Thm{States}.

\begin{cor}
Let $S= \set{\rho_i}_{i\in[n]}\subseteq\Pos(\bip{d_A}{d_B})$ be a full orthogonal set and assume that all but one $\rho_i$ can be expressed as $\rho_i = \sigma_i \otimes \tau_i$. If the states from $S$ can be discriminated with asymptotic LOCC then they can be discriminated with finite LOCC.
\label{cor:Tensor}
\end{cor}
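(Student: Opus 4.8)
The plan is to rerun the induction from the proof of \Thm{States} almost verbatim, changing only the one step where the product structure of every state was used. I would induct on $d_A+d_B$ with the same trivial base case $d_A+d_B\leq 3$ (where one system is a line and the states are distinguished by a single local measurement), but I would set up the induction on the statement of \Cor{Tensor} itself rather than on \Thm{States}, because the set produced after the first measurement will again have the ``all but one product'' shape. Relabel so that $\rho_n$ is the single state that need not factor, write $\rho_i=\sigma_i\otimes\tau_i$ for $i<n$, and let $\mathcal{H}^{(i)}$ be the column space of $\rho_i$. Because $S$ is a full orthogonal set, the supports are mutually orthogonal and $\bigoplus_{i\in[n]}\mathcal{H}^{(i)}=\C^{d_A}\otimes\C^{d_B}$.

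As in \Thm{States}, I would pick $\chi\in(\frac{1}{n},\frac{1}{n-1})$ and let $E=a\otimes b$ be the product operator supplied by \Thm{KKB}; this choice of $\chi$ again forces $E$ to be nontrivial and $E\rho_i\neq 0$ for every $i$, and I assume without loss of generality that $a$ is not proportional to $\id$. For the $n-1$ genuinely product states the hypothesis of \Lem{Nondist} is met, so that lemma applies unchanged and shows each $\mathcal{H}^{(i)}$ with $i<n$ is $(a\otimes\id)$-invariant.

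The one place the old argument breaks is $\rho_n$, whose column space $\mathcal{H}^{(n)}$ need not be a tensor product, so \Lem{Nondist} is unavailable. This is the main obstacle, and I would sidestep it with a Hermitian-complement observation instead of analyzing $\rho_n$ directly. Since $a\otimes\id$ is Hermitian and preserves every $\mathcal{H}^{(i)}$ with $i<n$, it preserves their orthogonal direct sum $W:=\bigoplus_{i<n}\mathcal{H}^{(i)}$; and the orthogonal complement of an invariant subspace of a Hermitian operator is again invariant, so $a\otimes\id$ also preserves $W^{\perp}=\mathcal{H}^{(n)}$. Hence all column spaces, including the non-factoring one, are $(a\otimes\id)$-invariant.

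From here the remainder is identical to \Thm{States}. By the equivalence of (1) and (3) in \Lem{LinAlg}, each $\mathcal{H}^{(i)}$ is $(a_\lambda\otimes\id)$-invariant for every $\lambda\in\spec(a)$, so the nontrivial local measurement $\set{a_\lambda:\lambda\in\spec(a)}\otimes\mc{\id}_B$ is non-disturbing for $S$. Conditioning on outcome $\lambda$ and restricting to the $\lambda$-eigenspace of $a$ via $Q_\lambda$ produces a set $S_\lambda$ on $\bip{\rank(a_\lambda)}{d_B}$ that is again a full orthogonal set (full rank by the $Q_\lambda$-embedding argument, orthogonality from non-disturbance) and is asymptotically discriminable (by composing the certifying protocols with Alice's isometry $Q_\lambda$). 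Crucially, the $n-1$ product states restrict to product states while only the image of $\rho_n$ may fail to factor, so $S_\lambda$ still has ``all but one product'' form; since $a$ is not proportional to $\id$ we have $\rank(a_\lambda)<d_A$, so the induction hypothesis of \Cor{Tensor} applies and discriminates each $S_\lambda$ with finite LOCC. Composing $\set{a_\lambda:\lambda\in\spec(a)}\otimes\mc{\id}_B$ with these finite protocols discriminates $S$ with finite LOCC.
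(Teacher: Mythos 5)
Your proposal is correct and follows essentially the same route as the paper: the paper's proof also handles the single non-product state by noting that the remaining $n-1$ column spaces are $(a\otimes\id)$-invariant by \Lem{Nondist}, that they span the orthogonal complement of the exceptional state's column space, and that the orthogonal complement of an invariant subspace of a Hermitian operator is again invariant. Your explicit observation that the induction must be run on the statement of \Cor{Tensor} itself (since $S_\lambda$ retains only the ``all but one product'' form) is a detail the paper leaves implicit, but it does not change the argument.
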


\begin{proof}
Suppose that $\rho_1$ is the state that is not a tensor product. The proof is similar to that of \Thm{States}, except we cannot use \Lem{Nondist} to conclude that $\mathcal{H}^{(1)}$ is $(a\otimes \id)$ invariant. Instead we use the fact that the orthogonal subspaces $\mathcal{H}^{(2)},\dotsc,\mathcal{H}^{(n)}$ are all $(a\otimes\id)$-invariant, $\bigoplus_{i\in\set{2,\dotsc,n}}\mathcal{H}^{(i)}=\big(\mathcal{H}^{(1)}\big)^\perp$, and $a$ is Hermitian, to conclude that $\mathcal{H}^{(1)}$ is $(a\otimes\id)$-invariant.
\end{proof}

The main obstacle in generalizing \Thm{Measurements} to all separable projective measurements is the lack of an analogue of \Lem{Nondist} for separable projectors $P$. For example, consider 
\begin{equation}
  P := \big( \proj{0}\otimes \proj{1} \big) + 
       \big( \proj{1}\otimes \proj{0-1} \big) +
      \big( \proj{0-1}\otimes \proj{2} \big)
\end{equation}
and $a \otimes b := \proj{1}\otimes \proj{0-1}$, where $\ket{0-1}:=(\ket{0}-\ket{1})/\sqrt{2}$. Let $\mc{H}$ be the space onto which $P$ projects. Although $\mc{H}$ is $(a\otimes b)$-invariant, it is neither $(a\otimes \id)$- nor $(\id\otimes b)$-invariant, since $(a\otimes \id)\ket{0-1,2}\notin \mc{H}$ and $(\id\otimes b)\ket{0,1}\notin \mc{H}$.

Therefore, the general question of whether the set of POVM
measurements implementable by $\LOCCN$ is closed remains open, despite 
partial progress presented by \Thm{Measurements}.

\subsection{Applications}
Although \Cor{Tensor} is only a slight generalization of \Thm{Measurements}, it provides answers to natural questions. For example, let us consider the following orthonormal product basis is known as the \emph{domino basis} \cite{IBM} :
\[
\begin{array}{ccc}
\ket{\psi_0} = \ket{1} \ket{1},~~  
 & \ket{\psi_1^\pm} = \ket{0} \ket{0 \pm 1},~~ & \ket{\psi_{2}^\pm} = \ket{0 \pm 1}\ket{2}, \\
 & \ket{\psi_{3}^\pm} = \ket{2} \ket{1 \pm 2},~~ &  \ket{\psi_{4}^\pm} = \ket{1 \pm 2} \ket{0}, 
\end{array}
\]
where $\ket{i \pm j} := (\ket{i} \pm \ket{j})/\sqrt{2}$. It is known that the domino states cannot be discriminated by asymptotic LOCC \cite{IBM}. However, as soon as we modify the problem slightly the answer becomes unclear. For example, it is not known whether the states from
\begin{equation}
  S := \set[\Big]{ \proj{\psi_i^+}: i\in [4]} \cup 
      \set[\Big]{\rho:= \proj{\psi_0}+\frac{1}{4}\sum_{i\in[4]} \proj{\psi_i^-}},
\end{equation}
can be discriminated with asymptotic LOCC.  Questions about asymptotic
LOCC can be difficult to answer. In some settings, \Thm{Measurements}
and \Cor{Tensor} allows us to reduce such questions to those about
finite LOCC which usually are more tractable, as demonstrated in 
the following example. 

\begin{lemma}
Let $S=\set{\proj{\phi_i} : i\in [4]} \cup \set{\rho}$ be such that $\ket{\phi_i} = \ket{\psi_i^+}$ or $\ket{\phi_i} = \ket{\psi_i^-}$ and $\rho$ is the uniform mixture of the remaining $5$ domino states. Then the states from $S$ cannot be discriminated with $\ALOCC$.
\label{lem:Finite}
\end{lemma}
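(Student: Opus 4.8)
The plan is to reduce the $\ALOCC$ question to a finite LOCC question and then settle the latter directly. First I would observe that $S$ is a full orthogonal set in $\bip{3}{3}$: its five elements are mutually orthogonal, since they are supported on mutually orthogonal spans of the nine domino vectors, and $\sum_{\sigma\in S}\sigma$ has full rank because those nine vectors span $\C^3\otimes\C^3$. Each of the four pure states $\proj{\phi_i}$ is a product state, so every element of $S$ except $\rho$ is a tensor product. Therefore \Cor{Tensor} applies, and it suffices to prove that the states from $S$ cannot be discriminated with finite LOCC; the impossibility for $\ALOCC$ then follows by contraposition.

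To rule out finite LOCC I would use the necessary condition recorded in \Sect{Prelim}: if $S$ is perfectly discriminable with finite LOCC, then $S$ admits a non-disturbing product operator of the form $a\otimes\id$ or $\id\otimes b$ whose non-trivial factor is not proportional to the identity. The goal is to show that no such operator exists. I would treat the case $a\otimes\id$ (Alice measures first) in detail and handle $\id\otimes b$ by the symmetric computation.

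The argument has two stages. First, the non-disturbance conditions $\bra{\phi_i}(a\otimes\id)\ket{\phi_j}=0$ for $i\neq j$ force $a$ to be diagonal: three of the six pairs have non-orthogonal Bob factors, and the resulting equations pin the off-diagonal entries $a_{01},a_{02},a_{12}$ (and hence their conjugates) to zero, regardless of the sign choices. Second, the conditions $\bra{\phi_2}(a\otimes\id)\ket{\psi_2^{\mp}}=0$ and $\bra{\phi_4}(a\otimes\id)\ket{\psi_4^{\mp}}=0$, where $\ket{\psi_2^{\mp}},\ket{\psi_4^{\mp}}\in\supp(\rho)$ carry the signs opposite to those of $\phi_2,\phi_4$, evaluate (for diagonal $a$) to $\tfrac12(a_{00}-a_{11})=0$ and $\tfrac12(a_{11}-a_{22})=0$. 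Hence all diagonal entries of $a$ coincide and $a\propto\id$. The analogous computation with $\id\otimes b$ gives $b\propto\id$, so the only non-disturbing local operators for $S$ are trivial, and $S$ is not discriminable with finite LOCC.

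I expect the second stage to be the crux. Replacing five domino states by the single mixed state $\rho$ deletes every non-disturbance constraint \emph{among} those five vectors, so one must verify that the constraints that remain still force triviality; the decisive ones couple each surviving pure state to the opposite-sign partner hidden inside $\supp(\rho)$. One also has to check that the conclusion is insensitive to the $2^4$ sign patterns for the $\phi_i$, which holds because flipping a sign alters only the magnitude of the relevant overlaps and never whether they vanish. Once finite LOCC is excluded, \Cor{Tensor} yields the claim for $\ALOCC$.
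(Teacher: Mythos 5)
Your proposal is correct and follows essentially the same route as the paper: invoke \Cor{Tensor} to reduce to finite LOCC, then show every non-disturbing operator of the form $a\otimes\id$ or $\id\otimes b$ is proportional to the identity, using the opposite-sign partners $\ket{\psi_2^{\mp}},\ket{\psi_4^{\mp}}\in\supp(\rho)$ to equate the diagonal entries. The only (immaterial) difference is that you kill the off-diagonal entries of $a$ using constraints among the four pure states alone, whereas the paper also uses $\ket{1,1}\in\supp(\rho)$; both computations check out for every sign pattern.
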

\begin{proof}
Because of \Cor{Tensor}, it suffices to prove that $S$ cannot be discriminated with \LOCCN.  
To do so, we only need to disprove the existence of nontrivial
(\ie, not proportional to the identity) positive semidefinite operators of the form $a\otimes\id$ and $\id \otimes b$. 

Let $\ket{\alpha_i}\ket{\beta_i} = \ket{\phi_i}$ and assume $a\otimes\id\in\Pos(\bip{3}{3})$ is non-disturbing for the set of states $S$. Then 
\begin{equation}
  0 = \bra{1,1} (a \otimes \id) \ket{\phi_1} = 
  \bra{1} a  \ket{0} \braket{1}{\beta_1} 
  \; \; \Leftrightarrow \; \;
  a_{10} = 0
  \; \; \Leftrightarrow \; \;
  a_{01} =0
\end{equation}
since $\braket{1}{\beta_1} \neq 0$ both for $\ket{\beta_1} = \ket{0+1}$ and $\ket{\beta_i} = \ket{0-1}$ and $a$ is Hermitian. Similarly, from $0 = \bra{1,1} (a \otimes \id) \ket{2,\beta_3}$ we obtain $a_{12}=a_{21}=0$. 
Either $\ket{\psi_2^+}$ or $\ket{\psi_2^-}$ belongs to the support of a different state from $S$ than $\ket{\psi_3^+}$. Hence, either $\bra{0-1}a\ket{2} = 0$ or $\bra{0+1}a\ket{2} = 0$. In both cases we obtain  $a_{02}=a_{20}=0$. To see that all the diagonal elements have to be equal, note that
$0 = \bra{0+1,2} (a \otimes \id) \ket{0-1,2} = a_{00}-a_{11}$ and $0 = \bra{1+2,0} (a \otimes \id) \ket{1-2,0} = a_{11}-a_{22}$. Thus, $a$ is proportional to the identity matrix. Via similar analysis, one can reach the same conclusion for $\id\otimes b$. Therefore, all the non-disturbing operators for $S$ are proportional to the identity matrix and our lemma follows.
\end{proof}

Building on \cite{KKB}, \cite{Chitambar-2013b} presents a
necessary condition for discriminating two states with asymptotic
LOCC.  When the support of these two states cover the whole
space this condition yields a simple criterion. This allows
the authors to show that
\begin{equation}
 \rho_+:= \frac{1}{4}\sum_{i\in[4]} \proj{\psi_i^+}
 \text{ and }
 \rho_-:= \frac{1}{5} \Big(\proj{1,1}+\sum_{i\in[4]} \proj{\psi_i^-} \Big)
\end{equation}
cannot be discriminated with asymptotic LOCC. Since our set $S$ is a
refinement of $\set[\big]{\rho_+,\rho_-}$ the impossibility to
distinguish the states from $S$ with asymptotic LOCC also follows from
the result of \cite{Chitambar-2013b}.  We illustrate
\Cor{Tensor} on the domino states because they are well-known.
There are cases where \Cor{Tensor} applies but the criterion 
from \cite{Chitambar-2013b} cannot be used to conclude that a set of
states cannot be discriminated with asymptotic LOCC.


\section{UBP's in $3\otimes3$ cannot be discriminated by $\ALOCC$}
\label{sec:UPB}

In this section, we turn our attention to the problem of
discrimination of unextendible product bases (UPBs).  It has been
shown in \cite{Terhal-1999,DiVincenzo-2003a} that UPBs cannot be perfectly discriminated
by finite LOCC.  Here, we show that any UPB in $\C^3 \otimes \C^3$
cannot be discriminated by asymptotic LOCC.

Reference \cite{DiVincenzo-2003a} establishes that any UPB in $\C^3 \otimes \C^3$ has
exactly $5$ states of the form $|\psi_i\> = |\a_i\> \otimes |\b_i\>$
which, up to local unitary transformations, can
be parametrized by six angles $\t_A$, $\g_A$, $\phi_A$, $\t_B$, $\g_B$, 
$\phi_B$: 
\bea
|\a_0\> & = & |0\>   \nonumber \\
|\a_1\> & = & |1\>   \nonumber \\
|\a_2\> & = & \cos \t_A |0\> + \sin \t_A |2\>  \nonumber \\
|\a_3\> & = & \sin \g_A \sin \t_A |0\> + \cos \g_{\!A} \, e^{i\phi_A} |1\> 
              - \sin \g_A \cos \t_A |2\>  \nonumber \\
|\a_4\> & = & \frac{1}{N_A} (\sin \g_A \cos \t_A \, e^{i\phi_A} |1\> 
             + \cos \g_A |2\>)  \nonumber \\
|\b_0\> & = & |1\>   \nonumber \\
|\b_1\> & = & \sin \g_B \sin \t_B |0\> + \cos \g_B \, e^{i\phi_B} |1\> 
              - \sin \g_B \cos \t_B |2\>  \nonumber \\
|\b_2\> & = & |0\>   \nonumber \\
|\b_3\> & = & \cos \t_B |0\> + \sin \t_B |2\>  \nonumber \\
|\b_4\> & = & \frac{1}{N_B} (\sin \g_B \cos \t_B \, e^{i\phi_B} |1\> 
             + \cos \g_B |2\>) 
\label{eq:upb}
\eea 
where $N_{A,B} = \sqrt{\cos^2 \g_{A,B} + \sin^2 \g_{A,B} \, \cos^2
  \t_{A,B}}$, and unextendibility implies that none of $\cos \g_{A,B}$,
$\sin \g_{A,B}$, $\cos \t_{A,B}$, $\sin \t_{A,B}$ vanishes.

\begin{theorem} \label{thm:upb-no-go}
The set of states $S = \{|\psi_i\>\}_{i=0,1,\cdots,4}$ cannot be 
discriminated by asymptotic LOCC.
\end{theorem}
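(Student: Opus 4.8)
The plan is to argue by contradiction using the necessary condition \Thm{KKB}. First I would verify that its hypothesis applies to $S=\set{\proj{\psi_i}}_{i=0}^4$: the space $\bigcap_i\ker\proj{\psi_i}$ is exactly the orthogonal complement of $\spn\set{\ket{\psi_i}}$, and unextendibility of the UPB means this complement contains no nonzero product vector. Hence if $S$ could be discriminated by asymptotic LOCC, then for every $\chi\in[1/5,1]$ there would exist a positive semidefinite product operator $E=a\otimes b$ meeting conditions (1)--(3) of \Thm{KKB}.

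Following the idea in the proof of \Thm{States}, I would then fix $\chi\in(\tfrac15,\tfrac14)$. Conditions (1) and (2) force $\tr(E\rho_i)>0$ for every $i$, since otherwise the remaining at most four traces, each $\le\chi<\tfrac14$, could not sum to $1$. Writing $\ket{\psi_i}=\ket{\alpha_i}\otimes\ket{\beta_i}$ and using $a,b\succeq0$, this gives $\bra{\alpha_i}a\ket{\alpha_i}>0$ and $\bra{\beta_i}b\ket{\beta_i}>0$ for all $i$, while non-disturbance (3) reads $\bra{\alpha_i}a\ket{\alpha_j}\,\bra{\beta_i}b\ket{\beta_j}=0$ for all $i\ne j$. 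The goal is to show these constraints force both $a$ and $b$ to be scalar; then $E\propto\id$, so all $\tr(E\rho_i)$ are equal, condition (1) makes the common value $\tfrac15$, and $\max_i\tr(E\rho_i)=\tfrac15\ne\chi$, the desired contradiction with \Thm{KKB}.

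The structural input is the orthogonality pattern of \Eq{upb}. A direct computation shows Alice's vectors $\ket{\alpha_i}$ are pairwise orthogonal exactly along the $5$-cycle $(0,1,2,3,4)$, and Bob's vectors $\ket{\beta_i}$ exactly along the complementary $5$-cycle $(0,2,4,1,3)$, so every pair is orthogonal on exactly one side. I would first prove two symmetric sub-claims: if $\bra{\alpha_i}a\ket{\alpha_j}=0$ on all five Alice-orthogonal pairs then $a$ is scalar, and if $\bra{\beta_i}b\ket{\beta_j}=0$ on all five Bob-orthogonal pairs then $b$ is scalar. Each follows by substituting \Eq{upb} and propagating around the connected cycle: the first few conditions force the off-diagonal entries to vanish, and the unextendibility assumptions (none of $\cos\gamma,\sin\gamma,\cos\theta,\sin\theta$ vanishing) then equate all diagonal entries.

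The main obstacle is the coupling introduced by the ``or'' in condition (3): on an edge where $a$ does not already annihilate the relevant $\ket{\alpha_i},\ket{\alpha_j}$ we only learn that $b$ vanishes there, and vice versa, so the two sub-claims do not apply verbatim. I would close the gap by a case analysis over the ten edges on which factor vanishes, exploiting that a ``misplaced'' vanishing (say $\bra{\beta_i}b\ket{\beta_j}=0$ on an edge where $\ket{\beta_i}\not\perp\ket{\beta_j}$) still imposes a rigid constraint via \Eq{upb}, together with positivity of $a,b$ and the connectivity of both cycles, to force all off-diagonals to zero and all diagonals equal. Carrying this case analysis through cleanly is where I expect the real difficulty to lie. (A more automatic alternative worth checking is to pass to the completed set $S'=S\cup\set{\tfrac14\Pi_W}$, where $\Pi_W$ projects onto the UPB complement: $S'$ is a full orthogonal set with all but one element a product, so \Cor{Tensor} combined with the support-measurement equivalence from the preliminaries would reduce asymptotic discrimination of $S$ to finite discrimination, contradicting the known finite-LOCC impossibility.)
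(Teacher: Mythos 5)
Your high-level strategy coincides with the paper's: argue by contradiction via \Thm{KKB} with $\chi\in(\tfrac15,\tfrac14)$, factor the non-disturbance condition as $\bra{\alpha_i}a\ket{\alpha_j}\bra{\beta_i}b\ket{\beta_j}=0$, and rule out every admissible $a\otimes b$. But the substance of the proof is precisely the part you defer. The idea that makes the case analysis finite is missing from your plan: define a graph $G_a$ on $\set{0,\dots,4}$ with an edge $(i,j)$ whenever $\bra{\alpha_i}a\ket{\alpha_j}=0$ (note this records where the matrix element of $a$ vanishes, not where the $\ket{\alpha_i}$ themselves are orthogonal), and similarly $G_b$. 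Since any three of the $\ket{\alpha_i}$ span $\C^3$, a vertex of degree $\ge 3$ in $G_a$ would force $a\ket{\alpha_i}=0$, hence $\tr(a\proj{\alpha_i})=0$, contradicting $\tr(E\proj{\psi_i})>0$; so both graphs have maximum degree $2$, and since together they cover all ten pairs they must be complementary $5$-cycles. This leaves exactly $12$ configurations ($8$ after the $\ket{0}\leftrightarrow\ket{1}$ relabelling symmetry). Your proposed ``case analysis over the ten edges'' has no such reduction, and you stop exactly where the work begins: in $11$ of the $12$ cases the constraints do \emph{not} force $a$ to be scalar. Instead one reconstructs the first two columns of $a$ from the two orthogonality conditions at vertices $0$ and $1$, uses Hermiticity and the remaining edge conditions to relate the free scalars $r_1,r_2,r_3$, and then derives $r_1r_2=0$ from non-negativity of $2\times2$ principal minors of $a\ge 0$ --- contradicting $\tr(a\proj{\alpha_i})>0$. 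Only in the single ``aligned'' case ($G_a$ equal to Alice's orthogonality cycle) does one obtain $a\propto\id$ and $b\propto\id$; your two sub-claims cover only that case. Two smaller omissions: you assume $a,b\succeq0$ without justification (the paper argues that $a$ cannot have eigenvalues of both signs, else $E=a\otimes b\not\ge0$), and you do not reduce to $\phi_{A,B}=0$ first.

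The ``more automatic alternative'' you float does not close the gap either. Asymptotic discrimination of the five UPB states does not imply asymptotic discrimination of the completed set $S'=S\cup\set{\tfrac14\Pi_W}$: a POVM that identifies which of the five states was supplied may act arbitrarily on the complementary subspace $W$, and the support--measurement equivalence quoted in the preliminaries does not hand you the six-outcome completed projective measurement on the whole space. Since \Cor{Tensor} would have to be applied to $S'$, the unjustified implication ``$S$ asymptotically distinguishable $\Rightarrow$ $S'$ asymptotically distinguishable'' is exactly the kind of step one cannot take for free here (compare the paper's remarks on the incomplete argument of \cite{DeRinaldis-2004a}). So the shortcut fails, and the case analysis sketched above is genuinely needed.
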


\begin{proof} 
First, we show why it suffices to prove the case for $\phi_{A,B} = 0$.
We can replace $|1\>$ by $|\widetilde{1}\> = e^{i\phi_A} |1\>$ in the
choice of the local basis on Alice's system.  Then, $\phi_A$ only
appears in $|\a_1\> = e^{-i\phi_A} |\widetilde{1}\>$.  Apply a
similar change of basis on Bob's system.  Clearly, replacing $|\a_1\>$
and $|\b_0\>$ by $|\widetilde{1}\>$ does not affect
distinguishability.

We now proceed to prove the theorem by contradiction.  Suppose $S$ can
be discriminated by asymptotic LOCC.

Then, Theorem \ref{thm:KKB} applies to $S$ since the kernel condition
holds automatically when $S$ is a UPB.  Here, $n = 5$ and we take
$\chi = 0.22$ (any $1/n < \chi < 1/(n{-}1)$ will do).  The theorem
states that $\exists E=a\otimes b \geq 0$ that is non-disturbing for
$S$.  Furthermore, $E \neq 0$ (else $\sum_{|\psi\> \in S}\tr(E
|\psi\>\<\psi|) = 0 \neq 1$), $E \not \propto I$ (else, $\max_{|\psi\>
  \in S} \tr(E |\psi\>\<\psi|)=\tfrac{1}{5} \sum_{|\psi\> \in S}\tr(E
|\psi\>\<\psi|) = 1/5 \neq 0.22$), and $\min_{|\psi\> \in
  S} \tr(E |\psi\>\<\psi|) > 0.12$.  From these we derive constraints
for $a$ and $b$.  We have $a, b \neq 0$, $a$ cannot have two nonzero
eigenvalues of opposite signs (else the same holds for $E = a \otimes
b$) and similarly for $b$, so, without loss of generality, $a \geq 0$,
$b \geq 0$.  Finally, $\min_i \tr(a |\a_i\>\<\a_i|) > 0$ and $\min_i
\tr(b |\b_i\>\<\b_i|) > 0$.

We will show that the non-disturbing property of $E$ is inconsistent
with the conditions on $a,b$.

Our main tool in the analysis is an extension of the orthogonality
graph for UPBs defined in \cite{DiVincenzo-2003a}.  Given $a,b \in \Herm(\C^3)$,
define two graphs $G_a, G_b$ as follows.  They both have vertex set $V
= \{0,1,\dotsc,4\}$.  $G_a$ ($G_b$) has an edge $(i,j)$ whenever
$\<\a_i|a|\a_j\> = 0$ ($\<\b_i|b|\b_j\> = 0$).  Since $E$ is
non-disturbing for $S$, the pair $(i,j)$ is an edge in
$G_a$ or $G_b$ for all distinct $i,j$.  Since any three $|\a_i\>$ span $\C^3$, no vertex in
$G_a$ has degree more than $2$, and similarly for $G_b$.  So, the 
only possible $G_a, G_b$ are $5$-cycles with complementary sets of edges.

Denote the $12$ possible $5$-cycles as $O_{1,\cdots,12}$.  We analyse
the $12$ possible cases for which $G_a = O_{1,\cdots,12}$ ($G_b$ is
then fixed) in detail in the appendix.  We will see that in one case,
$E \propto I$ which is a contradiction.  For all other cases, 
$\min_i \tr(a |\a_i\>\<\a_i|) = 0$ which is also a contradiction.

\comment{
Here, we describe the methodology in the analysis.  First,
interchanging $|0\>$ and $|1\>$ corresponds to a permutation of the
vertices $0 \leftrightarrow 1$ and $2 \leftrightarrow 4$ (with
modified values of $\t_A, \g_A$), and there are only $8$
cases to study.  In each case, $a|\a_0\>$ lives 
in a one-dimensional subspace because it is orthogonal to
two (linearly independent) $|\a_i\>$'s where $i$ is adjacent to $0$ in
$G_a$.  Similarly for $a|\a_1\>$.  We thus obtain the first two
columns of $a$ in terms of $\t_A, \g_A$ and two scalars.  Hermiticity
of $a$ and the remaining orthogonality conditions further restrict the
form of $a$ until it is clear that either $a \propto I$ (in which case
we show $b \propto I$ and $E \propto I$) or $a \not \geq 0$, or
$\min_i \tr(a |\a_i\>\<\a_i|) = 0$.
}

\end{proof} 

We note as a side remark that, using Theorems 2 and 3 in
\cite{DiVincenzo-2003a}, any bipartite UPB with $5$ states can be
perfectly discriminated by a separable measurement.  \Thm{upb-no-go}
thus provides another example of the phenomenon {\em nonlocality
  without entanglement}, in which a set of unentangled states cannot
be discriminated by asymptotic LOCC, but can be discriminated by
separable operations.

\section{Discussions}
\label{sec:Conclusion}

To ease the analysis of asymptotic LOCC we have introduced two scenarios in which 
no new task can be accomplished (information theoretically) 
collapse.  by allowing vanishing error.  The first scenario is the
implementation of projective measurements with tensor product
operators. The second is the discrimination of the states from an
unextendible product basis in $\bip{3}{3}$.  On the first subject, an
obvious next step is to investigate whether asymptotic LOCC can be
helpful for implementing any projective or general measurement.  A
second question is whether asymptotic LOCC can help for perfect state
discrimination.  On the second subject, it is likely that a general
UPB cannot be discriminated by asymptotic LOCC.  It will be nice to
obtain a rigorous proof.

Another very poorly understood subject in LOCC is round
complexity. Almost nothing is known about how many messages the
parties need to exchange in order to accomplish a task in the LOCC
setting, especially when a small probability of error is allowed.


\section{Acknowledgements}
DL is supported by NSERC, NSERC DAS, CRC, and CIFAR.  LM is funded by the Ministry of Education (MOE) and National Research Foundation Singapore, as well as MOE Tier 3 Grant ``Random numbers from quantum processes'' (MOE2012-T3-1-009).

\clearpage 


\newcommand{\etalchar}[1]{$^{#1}$}

\clearpage

\appendix

\section{Case analysis for Theorem \ref{thm:upb-no-go}}

For concreteness, we denote the $12$ possible $5$-cycles as follows: 
\begin{figure}[http]
\setlength{\unitlength}{0.5mm}
\centering
\begin{picture}(70,60)

\put(10,30){\circle*{3}}
\put(50,30){\circle*{3}}
\put(30,45){\circle*{3}}
\put(20,10){\circle*{3}}
\put(40,10){\circle*{3}}

\put(30,45){\line(-4,-3){20}}
\put(10,30){\line(1,-2){10}}
\put(20,10){\line(1,0){20}}
\put(40,10){\line(1,2){10}}
\put(50,30){\line(-4,3){20}}

\put(02,48){\makebox(10,10){$O_1$:}}

\put(25,47){\makebox(10,10){$0$}}
\put(00,26){\makebox(10,10){$1$}}
\put(11,01){\makebox(10,10){$2$}}
\put(39,01){\makebox(10,10){$3$}}
\put(50,26){\makebox(10,10){$4$}}

\end{picture}
\begin{picture}(70,60)

\put(10,30){\circle*{3}}
\put(50,30){\circle*{3}}
\put(30,45){\circle*{3}}
\put(20,10){\circle*{3}}
\put(40,10){\circle*{3}}

\put(30,45){\line(-4,-3){20}}
\put(10,30){\line(1,-2){10}}
\put(20,10){\line(1,0){20}}
\put(40,10){\line(1,2){10}}
\put(50,30){\line(-4,3){20}}

\put(02,48){\makebox(10,10){$O_2$:}}

\put(25,47){\makebox(10,10){$0$}}
\put(00,26){\makebox(10,10){$1$}}
\put(11,01){\makebox(10,10){$2$}}
\put(39,01){\makebox(10,10){$4$}}
\put(50,26){\makebox(10,10){$3$}}

\end{picture}
\begin{picture}(70,60)

\put(10,30){\circle*{3}}
\put(50,30){\circle*{3}}
\put(30,45){\circle*{3}}
\put(20,10){\circle*{3}}
\put(40,10){\circle*{3}}

\put(30,45){\line(-4,-3){20}}
\put(10,30){\line(1,-2){10}}
\put(20,10){\line(1,0){20}}
\put(40,10){\line(1,2){10}}
\put(50,30){\line(-4,3){20}}

\put(02,48){\makebox(10,10){$O_3$:}}

\put(25,47){\makebox(10,10){$0$}}
\put(00,26){\makebox(10,10){$1$}}
\put(11,01){\makebox(10,10){$3$}}
\put(39,01){\makebox(10,10){$2$}}
\put(50,26){\makebox(10,10){$4$}}

\end{picture}
\begin{picture}(70,60)

\put(10,30){\circle*{3}}
\put(50,30){\circle*{3}}
\put(30,45){\circle*{3}}
\put(20,10){\circle*{3}}
\put(40,10){\circle*{3}}

\put(30,45){\line(-4,-3){20}}
\put(10,30){\line(1,-2){10}}
\put(20,10){\line(1,0){20}}
\put(40,10){\line(1,2){10}}
\put(50,30){\line(-4,3){20}}

\put(02,48){\makebox(10,10){$O_4$:}}

\put(25,47){\makebox(10,10){$0$}}
\put(00,26){\makebox(10,10){$1$}}
\put(11,01){\makebox(10,10){$3$}}
\put(39,01){\makebox(10,10){$4$}}
\put(50,26){\makebox(10,10){$2$}}

\end{picture}

\vspace*{2ex}

\begin{picture}(70,60)

\put(10,30){\circle*{3}}
\put(50,30){\circle*{3}}
\put(30,45){\circle*{3}}
\put(20,10){\circle*{3}}
\put(40,10){\circle*{3}}

\put(30,45){\line(-4,-3){20}}
\put(10,30){\line(1,-2){10}}
\put(20,10){\line(1,0){20}}
\put(40,10){\line(1,2){10}}
\put(50,30){\line(-4,3){20}}

\put(02,48){\makebox(10,10){$O_5$:}}

\put(25,47){\makebox(10,10){$0$}}
\put(00,26){\makebox(10,10){$1$}}
\put(11,01){\makebox(10,10){$4$}}
\put(39,01){\makebox(10,10){$2$}}
\put(50,26){\makebox(10,10){$3$}}

\end{picture}
\begin{picture}(70,60)

\put(10,30){\circle*{3}}
\put(50,30){\circle*{3}}
\put(30,45){\circle*{3}}
\put(20,10){\circle*{3}}
\put(40,10){\circle*{3}}

\put(30,45){\line(-4,-3){20}}
\put(10,30){\line(1,-2){10}}
\put(20,10){\line(1,0){20}}
\put(40,10){\line(1,2){10}}
\put(50,30){\line(-4,3){20}}

\put(02,48){\makebox(10,10){$O_6$:}}

\put(25,47){\makebox(10,10){$0$}}
\put(00,26){\makebox(10,10){$1$}}
\put(11,01){\makebox(10,10){$4$}}
\put(39,01){\makebox(10,10){$3$}}
\put(50,26){\makebox(10,10){$2$}}

\end{picture}
\begin{picture}(70,60)

\put(10,30){\circle*{3}}
\put(50,30){\circle*{3}}
\put(30,45){\circle*{3}}
\put(20,10){\circle*{3}}
\put(40,10){\circle*{3}}

\put(30,45){\line(-4,-3){20}}
\put(10,30){\line(1,-2){10}}
\put(20,10){\line(1,0){20}}
\put(40,10){\line(1,2){10}}
\put(50,30){\line(-4,3){20}}

\put(02,48){\makebox(10,10){$O_7$:}}

\put(25,47){\makebox(10,10){$0$}}
\put(00,26){\makebox(10,10){$2$}}
\put(11,01){\makebox(10,10){$4$}}
\put(39,01){\makebox(10,10){$1$}}
\put(50,26){\makebox(10,10){$3$}}

\end{picture}
\begin{picture}(70,60)

\put(10,30){\circle*{3}}
\put(50,30){\circle*{3}}
\put(30,45){\circle*{3}}
\put(20,10){\circle*{3}}
\put(40,10){\circle*{3}}

\put(30,45){\line(-4,-3){20}}
\put(10,30){\line(1,-2){10}}
\put(20,10){\line(1,0){20}}
\put(40,10){\line(1,2){10}}
\put(50,30){\line(-4,3){20}}

\put(02,48){\makebox(10,10){$O_8$:}}

\put(25,47){\makebox(10,10){$0$}}
\put(00,26){\makebox(10,10){$2$}}
\put(11,01){\makebox(10,10){$3$}}
\put(39,01){\makebox(10,10){$1$}}
\put(50,26){\makebox(10,10){$4$}}

\end{picture}

\vspace*{2ex}

\begin{picture}(70,60)

\put(10,30){\circle*{3}}
\put(50,30){\circle*{3}}
\put(30,45){\circle*{3}}
\put(20,10){\circle*{3}}
\put(40,10){\circle*{3}}

\put(30,45){\line(-4,-3){20}}
\put(10,30){\line(1,-2){10}}
\put(20,10){\line(1,0){20}}
\put(40,10){\line(1,2){10}}
\put(50,30){\line(-4,3){20}}

\put(02,48){\makebox(10,10){$O_9$:}}

\put(25,47){\makebox(10,10){$0$}}
\put(00,26){\makebox(10,10){$3$}}
\put(11,01){\makebox(10,10){$4$}}
\put(39,01){\makebox(10,10){$1$}}
\put(50,26){\makebox(10,10){$2$}}

\end{picture}
\begin{picture}(70,60)

\put(10,30){\circle*{3}}
\put(50,30){\circle*{3}}
\put(30,45){\circle*{3}}
\put(20,10){\circle*{3}}
\put(40,10){\circle*{3}}

\put(30,45){\line(-4,-3){20}}
\put(10,30){\line(1,-2){10}}
\put(20,10){\line(1,0){20}}
\put(40,10){\line(1,2){10}}
\put(50,30){\line(-4,3){20}}

\put(02,48){\makebox(10,10){$O_{10}$:}}

\put(25,47){\makebox(10,10){$0$}}
\put(00,26){\makebox(10,10){$3$}}
\put(11,01){\makebox(10,10){$2$}}
\put(39,01){\makebox(10,10){$1$}}
\put(50,26){\makebox(10,10){$4$}}

\end{picture}
\begin{picture}(70,60)

\put(10,30){\circle*{3}}
\put(50,30){\circle*{3}}
\put(30,45){\circle*{3}}
\put(20,10){\circle*{3}}
\put(40,10){\circle*{3}}

\put(30,45){\line(-4,-3){20}}
\put(10,30){\line(1,-2){10}}
\put(20,10){\line(1,0){20}}
\put(40,10){\line(1,2){10}}
\put(50,30){\line(-4,3){20}}

\put(02,48){\makebox(10,10){$O_{11}$:}}

\put(25,47){\makebox(10,10){$0$}}
\put(00,26){\makebox(10,10){$4$}}
\put(11,01){\makebox(10,10){$3$}}
\put(39,01){\makebox(10,10){$1$}}
\put(50,26){\makebox(10,10){$2$}}

\end{picture}
\begin{picture}(70,60)

\put(10,30){\circle*{3}}
\put(50,30){\circle*{3}}
\put(30,45){\circle*{3}}
\put(20,10){\circle*{3}}
\put(40,10){\circle*{3}}

\put(30,45){\line(-4,-3){20}}
\put(10,30){\line(1,-2){10}}
\put(20,10){\line(1,0){20}}
\put(40,10){\line(1,2){10}}
\put(50,30){\line(-4,3){20}}

\put(02,48){\makebox(10,10){$O_{12}$:}}

\put(25,47){\makebox(10,10){$0$}}
\put(00,26){\makebox(10,10){$4$}}
\put(11,01){\makebox(10,10){$2$}}
\put(39,01){\makebox(10,10){$1$}}
\put(50,26){\makebox(10,10){$3$}}

\end{picture}

\end{figure}

Omitting the subscript $A$ (which is irrelevant here), and using the
shorthands $c_\t, s_\t$ for $\cos \t$, $\sin \t$, $c_\g, s_\g$ for
$\cos \g$, $\sin \g$, we have
\bea
|\a_0\> & = & |0\>   \nonumber \\
|\a_1\> & = & |1\>   \nonumber \\
|\a_2\> & = & c_\t |0\> + s_\t |2\>  \nonumber \\
|\a_3\> & = & s_\g s_\t |0\> + c_\g |1\> - s_\g c_\t |2\>  
          =   s_\g |\a_2^{\perp}\> + c_\g |1\>
          =   s_\g s_\t |0\> + N |\a_4^{\perp}\>   \nonumber \\
|\a_4\> & = & \frac{1}{N} (s_\g c_\t |1\> + c_\g |2\>)  
\label{eq:upba}
\eea 
where $N = \sqrt{c_\g^2 + s_\g^2 c_\t^2}\,$, 
and we rephrase $|\a_3\>$ in terms of 
$|\a_2^{\perp}\> = s_\t|0\> - c_\t |2\>$, and 
$|\a_4^{\perp}\> = (c_\g |1\> - s_\g c_\t |2\>)/N$, two states 
that appear frequently in the analysis.  

If we swap $|0\>$ with $|1\>$, Eq.~(\ref{eq:upba}) becomes
\bea
|\a'_0\> & = & |1\>   \nonumber \\
|\a'_1\> & = & |0\>   \nonumber \\
|\a'_2\> & = & c_\t |1\> + s_\t |2\> 
           = \frac{1}{N'} (s_{\g'} c_{\t'} |1\> + c_{\g'} |2\>)
\nonumber \\
|\a'_3\> & = & s_\g s_\t |1\> + c_\g |0\> - s_\g c_\t |2\> 
           = c_{\g'} |1\> + s_{\g'} s_{\t'} |0\> - c_{\t'} s_{\g'}|2\>
\nonumber \\
|\a'_4\> & = & \frac{1}{N} (s_\g c_\t |0\> + c_\g |2\>) =  
               c_{\t'} |0\> +  s_{\t'} |2\>
\label{eq:upbanew}
\eea 
where $s_{\t'} = c_\g/N$, $c_{\t'} = s_\g c_\t/N$, $c_{\g'} = s_\g
s_\t$, $s_{\g'} = N$, and $N' = s_\g$.  So the local change of basis
$|0\> \leftrightarrow |1\>$ swaps $|\a_0\>$ with $|\a_1\>$ and swaps
$|\a_2\>$ with $|\a_4\>$ with modified angles.  Thus the analysis for
$G_a = O_{j}$ applies to the case $G_a = O_{j{+}1}$ for $j=2,4,8,10$.

Here, we summarize the methodology in the analysis.  In each case,
$a|\a_0\>$ lives in a one-dimensional subspace because it is
orthogonal to two (linearly independent) $|\a_i\>$'s where $i$ is
adjacent to $0$ in $G_a$.  Similarly for $a|\a_1\>$.  We thus obtain
the first two columns of $a$ in terms of $\t, \g$ and two scalar multipliers 
$r_{1,2}$ for $a|\a_{0,1}\>$.  We often use the original orthogonality
conditions between the $|\a_i\>$'s to deduce the form of
$a|\a_{0,1}\>$.  With the first two columns of $a$ fixed, we use
hermiticity of $a$ to fix all but one entry of $a$ (which we call
$r_3$).  We use the remaining orthogonality conditions to relate
$r_{1,2,3}$ until we either obtain $a \propto I$ (in which case we
show $b \propto I$ as well and thus $E \propto I$) or use $a \geq 0$
to force $r_1=0$ or $r_2=0$ thereby contradicting $\min_i \tr(a
|\a_i\>\<\a_i|) > 0$.

{\bf Case I:} If $G_a = O_1$, then \\
$\bullet$ $a|\a_0\> \perp |\a_1\>, |\a_4\>$, so, $a|\a_0\> = r_1 |\a_0\>$. \\
$\bullet$ $a|\a_1\> \perp |\a_0\>, |\a_2\>$, so, $a|\a_1\> = r_2 |\a_1\>$. \\
By hermiticity, 
$a = \left( \begin{array}{ccc} r_1 & 0 & 0 \\ 0 & r_2 & 0 \\ 0 & 0 & r_3 
            \end{array} \right)$.  
So, $a |\a_3\> = r_1 s_\g s_\t |0\> + r_2 c_\g |1\> - r_3 s_\g c_\t |2\>$. \\
Imposing $0 = \<a_2| a |\a_3\> = c_\t r_1 s_\g s_\t - s_\t r_3 s_\g c_\t$ 
gives $r_1 = r_3$, and 
$0 = N \<a_4| a |\a_3\> = s_\g c_\t r_2 c_\g - c_g r_3 s_\g c_\t$ gives
$r_2 = r_3$, so $a \propto I$.  

When $G_a = O_1$, $G_b = O_7$.  Up to different choices for
the angles, the states $|\b_{0,3,1,4,2}\>$ are the same as 
$|\a_{1,2,3,4,0}\>$, so, the above analysis applies to $G_b = O_7$ and 
$b \propto I$.  Thus $E = a \otimes b \propto I$ a contradiction.

{\bf Case II:} If $G_a = O_2$, then \\
$\bullet$ $a|\a_0\> \perp |\a_1\>, |\a_3\>$, so, $a|\a_0\> = r_1 |\a_2\>$. \\
$\bullet$ $a|\a_1\> \perp |\a_0\>, |\a_2\>$, so, $a|\a_1\> = r_2 |\a_1\>$. \\
By hermiticity, 
$a = \left( \begin{array}{ccc} r_1 c_\t & 0 & r_1 s_\t \\ 
                               0 & r_2 & 0 \\ r_1 s_\t & 0 & r_3 
            \end{array} \right)$.  
So, $a (N |\a_4\>) 
     = r_1 s_\t c_\g |0\> + r_2 s_\g c_\t |1\> + r_3 c_\g |2\>$. \\
Imposing $0 = N \<a_2| a |\a_4\> = c_\t r_1 s_\t c_\g + s_\t r_3 c_\g$ gives  
$c_\t r_1 = -r_3$.  But $a \geq 0$ implies non-negativity of all diagonal 
elements, so, $r_1 = 0$ and thus $\tr(a |\a_0\>\<\a_0|) = 0$ a 
contradiction.

The same analysis applies to $G_a = O_3$.

{\bf Case III:} If $G_a = O_4$, then \\
$\bullet$ $a|\a_0\> \perp |\a_1\>, |\a_2\>$, so, 
          $a|\a_0\> = r_1 |\a_2^{\perp}\>$. \\
$\bullet$ $a|\a_1\> \perp |\a_0\>, |\a_3\>$, so, 
          $a|\a_1\> = r_2 (N|\a_4\>)$.\\
By hermiticity, 
$a = \left( \begin{array}{ccc} r_1 s_\t & 0 & - r_1 c_\t \\ 
             0 & r_2 s_\g c_\t & r_2 c_\g \\ - r_1 c_\t & r_2 c_\g & r_3 
            \end{array} \right)$.  \\
So, $a (N |\a_4\>) 
     = - r_1 c_\t c_\g |0\> + r_2 (s_\g^2 c_\t^2 + c_\g^2) |1\> 
       + (r_2 c_\g s_\g c_\t + r_3 c_\g) |2\>$. \\
Now $0 = N \<a_2| a |\a_4\> = c_\g \left(
          -r_1 c_\t^2 + s_\t (r_2 s_\g c_\t + r_3 )  \right)$ so 
$s_\t r_3 = r_1 c_\t^2 - s_\t r_2 s_\g c_\t$.

Since $a \geq 0$, the minors $r_1 s_\t r_2 s_\g c_\t \geq 0$, 
$r_1 s_\t r_3 - r_1^2 c_\t^2 \geq 0$.  \\ Eliminating $r_3$ in 
the latter, $r_1 (r_1 c_\t^2 - s_\t r_2 s_\g c_\t) - r_1^2 c_\t^2 
= -r_1 s_\t r_2 s_\g c_\t \geq 0$.  So $r_1 r_2 = 0$, which means 
$\tr(a |\a_i\>\<\a_i|) = 0$ either for $i=0$ or $1$ which is 
a contradiction.
\\[0.7ex]
The same analysis applies to $G_a = O_5$.  


{\bf Case IV:} If $G_a = O_6$, then \\
$\bullet$ $a|\a_0\> \perp |\a_1\>, |\a_2\>$, so, 
          $a|\a_0\> = r_1 |\a_2^{\perp}\>$. \\
$\bullet$ $a|\a_1\> \perp |\a_0\>, |\a_4\>$, so, 
          $a|\a_1\> = r_2 (N|\a_4^{\perp}\>)$.\\
By hermiticity, 
$a = \left( \begin{array}{ccc} r_1 s_\t & 0 & - r_1 c_\t \\ 
       0 & r_2 c_\g & - r_2 s_\g c_\t \\ - r_1 c_\t & - r_2 s_\g c_\t & r_3 
            \end{array} \right)$.  \\
So, $a (|\a_3\>) 
     =  r_1 s_\g |0\> + r_2 (s_\g^2 c_\t^2 + c_\g^2) |1\> 
       - c_\t s_\g (r_1 s_\t + r_2 c_\g + r_3) |2\>$. \\
Now $0 = \<a_4| a |\a_3\> = 
     s_\g c_\t r_2 (s_\g^2 c_\t^2 + c_\g^2)
   - c_\g c_\t s_\g (r_1 s_\t + r_2 c_\g + r_3)
=    s_\g c_\t (r_2 s_\g^2 c_\t^2 - c_\g r_1 s_\t - c_\g r_3)$. 
So, $c_\g r_3 = r_2 s_\g^2 c_\t^2 - c_\g r_1 s_\t$.  

Since $a \geq 0$, the minors $r_1 s_\t r_2 c_\g \geq 0$, 
$r_2 c_\g r_3 - r_2^2 s_\g^2 c_\t^2 \geq 0$.  
\\ Eliminating $r_3$ in 
the latter, 
$r_2 (r_2 s_\g^2 c_\t^2 - c_\g r_1 s_\t) - r_2^2 s_\g^2 c_\t^2 \geq 0$.  
Simplifying, $r_2 (- c_\g r_1 s_\t) \geq 0$.  So, $r_1 r_2 = 0$, which 
is a contradiction (see case III).

{\bf Case V:} If $G_a = O_7$, then \\
$\bullet$ $a|\a_0\> \perp |\a_2\>, |\a_3\>$, so, 
          $a|\a_0\> = r_1 (s_\g |1\> - c_\g |\a_2^{\perp}\>)
                    = r_1 (-c_\g s_\t |0\> + s_\g |1\> + c_\g c_\t |2\>)$. \\
$\bullet$ $a|\a_1\> \perp |\a_3\>, |\a_4\>$, so, 
          $a|\a_1\> = r_2 N (N |0\> - s_\g s_\t |\a_4^{\perp}\>)
          = r_2 \left( (c_\g^2 + s_\g^2 c_\t^2) |0\> - s_\g s_\t c_\g |1\> 
                  + s_\g^2 s_\t c_\t |2\> \right) $.\\
Thus
$a = \left( \begin{array}{ccc} 
- r_1 c_\g s_\t & r_2 (c_\g^2 + s_\g^2 c_\t^2) & * \\ 
r_1 s_\g & - r_2 s_\g s_\t c_\g & *  \\ 
r_1 c_\g c_\t & r_2  s_\g^2 s_\t c_\t & *  
            \end{array} \right)$ where we omit the third column which 
does not enter the analysis.

By hermiticity, 
$r_1 s_\g =  r_2 (c_\g^2 + s_\g^2 c_\t^2)$.  
Multiplying both sides by $r_2$, we get $r_1 s_\g r_2 \geq 0$. 

The minor after deleting the second row and the third column is 
$r_1 r_2 s_\g (c_\g^2 s_\t^2  - c_\g^2 - s_\g^2 c_\t^2) \geq 0$.
The expression in the parenthesis is negative, so, $r_1 r_2 s_\g \leq 0$.  

Together, $r_1 r_2 = 0$, but that gives a contradition.

{\bf Case VI:} If $G_a = O_8$, then \\
$\bullet$ $a|\a_0\> \perp |\a_2\>, |\a_4\>$, so, 
          $a|\a_0\> = r_1 |\a_3\> 
                    = r_1 (s_\g s_\t |0\> + c_\g |1\> - s_\g c_\t |2\>)$. \\
$\bullet$ $a|\a_1\> \perp |\a_3\>, |\a_4\>$ (which is same as in case V).

Thus
$a = \left( \begin{array}{ccc} 
r_1 s_\g s_\t & r_2 (c_\g^2 + s_\g^2 c_\t^2) & * \\ 
r_1 c_\g & - r_2 s_\g s_\t c_\g & *  \\ 
- r_1 s_\g c_\t & r_2  s_\g^2 s_\t c_\t & *  
            \end{array} \right)$ where we omit the third column which 
does not enter the analysis.

By hermiticity, 
$r_1 c_\g =  r_2 (c_\g^2 + s_\g^2 c_\t^2)$, so,  $r_1 r_2 c_\g \geq 0$.  

The determinent of the $|0\>,|1\>$ block is 
$- r_1 r_2 c_\g (s_\g^2 s_\t^2 + s_\g^2 c_\t^2 + c_\g^2) =  - r_1 r_2 c_\g
\geq 0$.  

Together, $r_1 r_2 = 0$, giving a contradition. 

The same analysis applies to $G_a = O_9$.

{\bf Case VII:} If $G_a = O_{10}$, then $a|\a_0\> \perp |\a_3\>, |\a_4\>$ 
and $a|\a_1\> \perp |\a_2\>, |\a_4\>$. 

So, part of $a$ can be obtained from 
that in case VI with the first two columns interchanged:

$a = \left( \begin{array}{ccc} 
r_2 (c_\g^2 + s_\g^2 c_\t^2) & r_1 s_\g s_\t & * \\ 
- r_2 s_\g s_\t c_\g & r_1 c_\g & *  \\ 
r_2  s_\g^2 s_\t c_\t & - r_1 s_\g c_\t & *  
            \end{array} \right)$.

Hermiticity now implies $-r_2 c_\g = r_1$

The determinent of the $|0\>,|1\>$ block is minus that in case VI. 
So, $r_1 r_2 c_\g \geq 0$.  

Together, $-r_1^2 \geq 0$, so, $r_1 = 0$ giving a contradition. 

The same analysis applies to $G_a = O_{11}$.  

\clearpage 

{\bf Case VIII:} If $G_a = O_{12}$, then $a|\a_0\> \perp |\a_3\>, |\a_4\>$ 
and $a|\a_1\> \perp |\a_2\>, |\a_3\>$. 

So, part of $a$ can be obtained from that in case V with the first two
columns interchanged:

$a = \left( \begin{array}{ccc} 
r_2 (c_\g^2 + s_\g^2 c_\t^2) & - r_1 c_\g s_\t & r_2  s_\g^2 s_\t c_\t \\ 
- r_2 s_\g s_\t c_\g & r_1 s_\g & r_1 c_\g c_\t  \\ 
r_2  s_\g^2 s_\t c_\t & r_1 c_\g c_\t & r_3
            \end{array} \right)$ 

where we fill in part of the third column using hermiticity.  
Hermiticity also implies $r_2 s_\g = r_1$.

Since $a|\a_2\> \perp |\a_1\>, |\a_4\>$, so $a|\a_2\> \propto |0\>$ 
and \\
$0 = \<2|a|\a_2\> = \<2| a (c_\t |0\> + s_\t |2\>) 
       = c_\t \<2|a|0\> + s_\t \<2|a|2\> = c_\t r_2  s_\g^2 s_\t c_\t  
       + s_\t r_3$, so, $r_3 = - r_2  s_\g^2 c_\t^2$.  

Using the relation between $r_{1,2}$, we get $r_3 = -r_1 s_\g c_\t^2$.

Since $a \geq 0$, product of the last two diagonal elements is non-negative.
So,  $r_1 s_\g (-r_1 s_\g c_\t^2) \geq 0$, and $r_1 = 0$ a contradiction.

\end{document}